\newtheorem{proposition}{Proposition}
\newtheorem{theorem}{Theorem}
\theoremstyle{definition}
\newtheorem{remark}{Remark}
\newtheorem{example}{Example}
\newtheorem{definition}{Definition}
\newcommand{\R}{\mathbb R} %real
\newcommand{\C}{\mathbb C} %complex
\newcommand{\N}{\mathbb N} %natural
\newcommand{\Z}{\mathbb Z} %integer
\newcommand{\T}{\mathbb T} %torus
\newcommand{\abs}[1]{\left| #1 \right|} %absolute value
\newcommand{\hi}{\mathcal{H}} %Hilbert space
\newcommand{\hh}{\hi} %Hilbert space
\newcommand{\hik}{\mathcal{K}} %other Hilbert space
\newcommand{\kk}{\hik} %another Hilbert space
\newcommand{\lh}{\mathcal{L(H)}} %bounded linear operators
\newcommand{\trh}{\mathcal{T(H)}} %trace class operators
\newcommand{\sh}{\mathcal{S(H)}} %states
\newcommand{\shd}{\mathcal{S(H)_{\rm diag}}} %diagonal states
\newcommand{\ch}{{\rm Ch} (\mathcal{H})}
\newcommand{\CPH}{\ch^*} %normal completely positive
\newcommand{\uh}{\mathcal{U(H)}} %unitary operators
\newcommand{\ip}[2]{\left\langle\,#1\,|\,#2\,\right\rangle} %inner product
\newcommand{\ket}[1]{|#1\rangle} %ket
\newcommand{\kb}[2]{|#1\,\rangle\langle\,#2|} %ketbra
\newcommand{\nil}{O} %zero operator
\newcommand{\no}[1]{\left\|#1\right\|} %norm
\newcommand{\tr}[1]{\mathrm{tr}\left[#1\right]} %trace
\newcommand{\group}{G} %group
\newcommand{\obs}{\mathcal{O}} %set of covariant observables
\newcommand{\E}{\mathsf{E}} %generic observable
\newcommand{\F}{\mathsf{F}} %generic observable
\newcommand{\Enu}{\E_\nu} %fuzzy version of E
\newcommand{\Ecan}{\E_{\rm can}} %canonical phase observable
\newcommand{\Es}{\E_{\ket{s}}} % phase observable s
\newcommand{\CC}{\mathcal C} %phase matrices
\newcommand{\pre}{\textsc{pre}}
\newcommand{\post}{\textsc{post}}
\newcommand{\csim}{\stackrel{\post}{\sim}}  
\newcommand{\coar}{\preccurlyeq_{\post}}
\newcommand{\dsim}{\stackrel{\pre}{\sim}}  
\newcommand{\dis}{\preccurlyeq_{\pre}}
\newcommand{\bor}[1]{\mathcal{B}(#1)} %borel sigma-algebra
\newcommand{\borG}{\bor{\group}} %borel sigma-algebra of group
\newcommand{\borO}{\mathcal{B}(\Omega)} %borel sigma-algebra of omega
\newcommand{\borT}{\bor{\T}} %borel sigma-algebra of torus
\newcommand{\prob}[1]{M_1^+(#1)} %probability measures
\newcommand{\probO}{\prob{\Omega}} %probability measures on omega
\newcommand{\probG}{\prob{\group}} %probability measures on group
\newcommand{\linf}{L^\infty (\T)}
\newcommand{\ldue}[1]{L^2 \left( #1 \right)}
\newcommand{\de}{\, d}
\newcommand{\frecc}{\rightarrow}
\newcommand{\lft}{\left(}
\newcommand{\rgt}{\right)}
\newcommand{\ext}{\mathop{\rm Ext}}
\newcommand{\rank}{\mathop{\rm rank}}
\begin{document}

\title[Optimal phase observables]{Optimal covariant measurements: the case of a compact symmetry group and phase observables}

\author[Carmeli]{Claudio Carmeli}
\address{Claudio Carmeli, Dipartimento di Fisica, Universit\`a di Genova, and I.N.F.N., Sezione di Genova, Via Dodecaneso 33, 16146 Genova, Italy}
\email{carmeli@ge.infn.it}

\author[Heinosaari]{Teiko Heinosaari}
\address{Teiko Heinosaari, Department of Physics, University of Turku, 
FIN-20014 Turku, Finland}
\email{heinosaari@gmail.com}

\author[Pellonp\"a\"a]{Juha-Pekka Pellonp\"a\"a}
\address{Juha-Pekka Pellonp\"a\"a, Department of Physics, University of Turku, 
FIN-20014 Turku, Finland}
\email{juhpello@utu.fi}

\author[Toigo]{Alessandro Toigo}
\address{Alessandro Toigo, Dipartimento di Informatica, Universit\`a di Genova, Via Dodecaneso 35, and I.N.F.N., Sezione di Genova, Via Dodecaneso 33, 16146 Genova, Italy}
\email{toigo@ge.infn.it}

\date{\the\day/\the\month/\the\year}

\maketitle

\begin{abstract}
We study various optimality criteria for quantum observables. Observables are represented as covariant positive operator valued measures and we consider the case when the symmetry group is compact. Phase observables are examined as an example.
\end{abstract}

%\tableofcontents

\maketitle

%%%%%%%%%%%%%%%%%%%%
\section{Introduction}\label{sec:intro}
%%%%%%%%%%%%%%%%%%%%

In this work we study four optimality criteria applied on covariant quantum observables. Quantum observables are described as normalized positive operator valued measures. Covariance means that observables transform in a consistent way under a group operation. For example, phase observables are defined as the normalized positive operator measures on a circle which are covariant under the phase shifts.  

A quantum observable can be optimal in various ways. There are at least four different ways how optimality can be defined. Namely, an observable can be
\begin{itemize}
\item[(a)] (approximately) sharp
\item[(b)] extremal 
\item[(c)] postprocessing clean
\item[(d)] preprocessing clean
\end{itemize}
If an observable is (approximately) sharp then for any neighborhood of a point from the set of possible measurement outcomes, one can prepare a state such that the probability of getting a result from the neighborhood is (approximately) 1. This reflects the fact that an approximately sharp observable does not have intrinsic unsharpness -- spread in the measurement outcome distribution depends solely on the measured state. 

A covariant observable is extremal in the convex set of all covariant observables if it cannot be represented as a nontrivial convex combination of other covariant observables. A convex mixture of observables corresponds to a random choice between measurement apparatuses, and thus an extremal observable is unaffected by this kind of randomness.

After a measurement of an observable, one may try to process the obtained measurement outcome data to get the measurement outcome data of another observable. An observable is called postprocessing clean if it cannot be obtained by manipulating the measurement outcome data of another observable.

Analogously, a preprocessing clean observable cannot be obtained by manipulating the state before the measurement and then measuring some other observable. State manipulation is described by a quantum channel, which typically loses some information. Hence, a preprocessing clean observable is not irreversibly connected to another observable by a channel.

In Section \ref{sec:covariant} we formulate the optimality criteria and derive some general results in the case of a compact symmetry group. In Section \ref{sec:phase} we focus on the phase observables. The phase observables form an interesting class as there is no sharp phase observable (i.e.\ projection valued measure). This class is also a rich example for the general investigation on optimal observables, showing various connections and differences between the four optimality criteria.

%%%%%%%%%%%%%%%%%%%%%%%%%%%%%%%%
\section{Optimal covariant observables}\label{sec:covariant}
%%%%%%%%%%%%%%%%%%%%%%%%%%%%%%%%

%%%%%%%%%%%%%%%%%%%%%%%%%%%
\subsection{Basic definitions}\label{sec:definitions}
%%%%%%%%%%%%%%%%%%%%%%%%%%%

In quantum mechanics, observables are represented as normalized positive operator valued measures. We shortly recall some relevant basic concepts. For more details, we refer to \cite{OQP97}, \cite{FQMEA02}, \cite{SSQT01}.

Let $\hi$ be a complex separable Hilbert space. We denote by $\lh$ the set of bounded operators and $\trh$ the set of trace-class operators on $\hi$. We denote by $O$ and $I$ be the zero and indentity operators on $\hi$, respectively. Let $\Omega$ be a topological space. We denote by $\borO$ the Borel $\sigma$-algebra on $\Omega$.

\begin{definition}\label{def:observable}
A set function $\E:\borO\to\lh$ is an \emph{observable} if it satisfies the following conditions:
\begin{itemize}
\item $O\leq \E(X) \leq I$ for any $X\in\borO$;
\item $\E(\Omega)=I$;
\item $\E(\cup_i X_i)=\sum_i \E(X_i)$ for any sequence $\{X_i\}$ of disjoint Borel sets, the sum converging in the weak operator topology.
\end{itemize} 
The set $\Omega$ is called the \emph{outcome set} of $\E$.

Two observables $\E$ and $\F$ are called {\em equivalent}, and denoted $\E\sim\F$, if there exists a unitary operator $W\in\lh$ such that $\F(X)=W\E(X)W^\ast$ for all $X\in\borO$.
\end{definition}

States of a quantum system are described by (and identified with) positive trace class operators of trace 1. We denote by $\sh$ the convex set of all states.  If $\E$ is an observable and $\varrho$ is a state, then the trace formula 
\begin{equation*}
p^\E_\varrho(X):=\tr{\varrho\E(X)},\quad X\in\borO,
\end{equation*}
defines a probability measure $p^\E_\varrho$ on $\borO$. The number $p^{\E}_{\varrho}(X)$ is interpreted as the probability of getting a measurement outcome $x$ belonging to $X$, when the system is in the state $\varrho$ and the observable $\E$ is measured. 

There is also an equivalent description of observables which we need later. Namely, an observable $\E$ determines a mapping $\Theta_\E$ from the set of states $\sh$ into the set of probability measures $\probO$,
\begin{equation*}
\varrho\mapsto\Theta_\E(\varrho):=p^\E_\varrho\, .
\end{equation*}
The mapping $\Theta_\E$ is affine, i.e., it maps convex combinations of states to convex combinations of corresponding probability measures. 

In this paper we study observables which have a specific symmetry property, called covariance. To formulate this concept, let $\group$ be a topological group. For simplicity, here we consider only the case where $\group$ is the symmetry group and also the outcome space. (This corresponds to the situation when the symmetry group acts transitively on the outcome space $\Omega$ and the stability subgroup is the trivial one element group consisting of the identity element $e$ only.) By a unitary representation of $\group$ we mean a strongly continuous group homomorphism from $\group$ to the group $\uh$ of unitary operators in a Hilbert space $\hi$.

\begin{definition}
Let $U$ be a unitary representation of $\group$ in a Hilbert space $\hi$. An observable $\E:\borG \frecc \lh$ is {\em covariant} with respect to $U$ (or {\em $U$-covariant}, for short) if 
\begin{equation*}
U(g) \E(X) U(g)^\ast = \E(gX)
\end{equation*}
for all $g\in G$ and $X\in\borG$.

Two $U$-covariant observables $\E$ and $\F$ are called {\em $U$-equivalent}, and denoted $\E\sim_U\F$, if there exists a unitary operator $W\in\lh$ such that $U(g)W=WU(g)$ for all $g\in G$ and $\F(X)=W\E(X)W^\ast$ for all $X\in\borG$.
\end{definition}

Covariance of observables arises in many different situations; we refer to \cite{OQP97}, \cite{PSAQT82} for many examples. Trivially, for two $U$-covariant observables $\E$ and $\F$, the equivalence $\E\sim_U\F$ implies that $\E\sim\F$ but, as we will show later (see Proposition \ref{prop:not-U-equivalent}), $\E\sim\F$ does {\it not} necessarily imply that $\E\sim_U\F$.

In the rest of this paper, $\group$ is a compact topological group which is Hausdorff and satisfies the second axiom of countability. We denote by $\mu$ the Haar measure of $\group$, normalized so that $\mu(\group)=1$. We fix a unitary representation $U$ of $\group$, and the set of all $U$-covariant observables is denoted by $\obs$.

%%%%%%%%%%%%%%%%%%%%%%%%%%%%%%%%%%
\subsection{Approximately sharp observables}\label{sec:sharp}
%%%%%%%%%%%%%%%%%%%%%%%%%%%%%%%%%%

Suppose for a moment that $\E:\borG \frecc \lh$ is a projection valued observable, also called a {\it sharp observable}.  Then for any $X\in\borG$ with $\E(X)\ne O$, we can choose a (pure) state $\varrho\in\sh$ such that $\tr{\varrho\E(X)}=1$. This means that we can prepare the system in a state $\varrho$ such that the probability of getting an outcome $x\in X$ is 1. 

This property of sharp observables is a simple consequence of the fact that each non-zero operator $\E(X)$ has eigenvalue 1. Especially, $\no{\E(X)}=1$. This gives a useful generalization of sharp observables  \cite{HeLaPePuYl03},\cite{SSQT01}.

\begin{definition}\label{def:as}
An observable $\E:\borG \frecc \lh$ is \emph{approximately sharp} if $\no{\E(X)}=1$ for every open set $X\subseteq\group$ such that $\E(X)\neq O$.
\end{definition}

Let us first note that the condition $\no{\E(X)}=1$ does not imply that $\tr{\varrho\E(X)}=1$ for some state $\varrho$. Indeed, $\no{\E(X)}=1$ means that 1 is in the spectrum of $\E(X)$ but 1 need not be an eigenvalue. The condition $\no{\E(X)}=1$ is equivalent to the fact that there exists a sequence $\{\varrho_n\}_{n\in\N}$ of (pure) states such that
\begin{equation*}
\lim_{n\to\infty} \tr{\varrho_n \E(X)}=1 \, .
\end{equation*}
Therefore, approximate sharpness means that we can prepare the system in a state $\varrho_{n}$ such that the probability of getting an outcome from $X$ is arbitrarily close to 1. 

Now, assume that $\E$ is a $U$-covariant observable. Then $\E(X)=\nil$ if and only if $\mu(X)=0$; see \cite{HeLaPePuYl03}. On the other hand, $\mu(X)>0$ for any (non-empty) open set $X\subseteq\group$. Thus, $\E$ is approximately sharp if $\no{\E(X)}=1$ for every open set $X\subseteq\group$, $X\ne\emptyset$.

\begin{proposition}
Let $\E,\F\in\obs$ and $\E\sim\F$. If $\E$ is approximately sharp, then also $\F$ is such. 
\end{proposition}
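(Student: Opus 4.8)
The plan is to exploit the fact that the equivalence $\E\sim\F$ is, by definition, implemented by a unitary conjugation, together with the observation that both the operator norm and the property of being the zero operator are invariant under such a conjugation. Note first that approximate sharpness, as given in Definition~\ref{def:as}, is a condition purely about the norms $\no{\E(X)}$ on open sets; accordingly, the covariance of $\E$ and $\F$ will play no role in the argument.

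Concretely, from $\E\sim\F$ I would fix a unitary $W\in\lh$ with $\F(X)=W\E(X)W^\ast$ for every $X\in\borG$. Since $W$ is unitary, $\no{W A W^\ast}=\no{A}$ for every $A\in\lh$, and $W A W^\ast=\nil$ if and only if $A=\nil$. These two elementary facts are the only ingredients needed.

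Now take any open set $X\subseteq\group$ with $\F(X)\neq\nil$. Injectivity of the conjugation gives $\E(X)=W^\ast\F(X)W\neq\nil$, so the assumed approximate sharpness of $\E$ yields $\no{\E(X)}=1$. Unitary invariance of the norm then gives $\no{\F(X)}=\no{W\E(X)W^\ast}=\no{\E(X)}=1$. As $X$ was an arbitrary open set on which $\F$ does not vanish, this shows that $\F$ is approximately sharp.

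There is no genuine obstacle here: the whole statement reduces to the unitary invariance of the operator norm. The only point worth flagging is that neither the compactness of $\group$ nor the $U$-covariance of $\E$ and $\F$ is actually used, so the conclusion in fact holds for any pair of unitarily equivalent observables.
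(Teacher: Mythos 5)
Your proof is correct and follows essentially the same route as the paper's: both rest entirely on the unitary invariance of the operator norm under the conjugation $\F(X)=W\E(X)W^\ast$. Your extra remark that the nonvanishing condition is also preserved is a slightly more careful bookkeeping of the same idea, and your observation that covariance plays no role is accurate.
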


\begin{proof}
Since $\E\sim\F$, there is a unitary operator $W$ such that $\F(X)=W\E(X)W^\ast$ for every $X\in\borG$. This implies that the operator norms of $\E(X)$ and $\F(X)$ are the same, i.e., $\no{\F(X)}=\no{W\E(X)W^*}=\no{\E(X)}$.
\end{proof}

Until now, it may seem that the concept of an approximately sharp observable is quite artificial. The usefulness and importance of this concept becomes evident in the situations when there are no sharp observables in the set $\obs$, but there exist approximately sharp observables in $\obs$. This is the case  for phase observables, which we will study in Section \ref{sec:phase}.

%%%%%%%%%%%%%%%%%%%%%%%%%%%%%%%%%%%%%%%%%%%%%
\subsection{Extremal observables}\label{sec:convex}
%%%%%%%%%%%%%%%%%%%%%%%%%%%%%%%%%%%%%%%%%%%%%

The set of all $U$-covariant observables $\obs$ is convex, and we denote by $\ext\obs$ the set of extremal elements of $\obs$. A convex mixture of two observables corresponds to a random choice or fluctuation between two measurement apparatuses. An extremal observable thus describes an observable which is unaffected by this kind of randomness.

In our earlier article \cite{CaHePeTo08} we have characterized the extremal observables in $\obs$. We do not reproduce this characterization here but we use it in the special case of phase observables in Section \ref{sec:phase-convex}. Extremality and approximate sharpness have the following simple connection.

\begin{proposition}\label{prop:convex-as}
Let $\E,\E_1,\E_2$ be three observables and assume that $\E$ is a convex combination of $\E_1$ and $\E_2$. If $\E$ is approximately sharp, then also $\E_1$ and $\E_2$ are approximately sharp.
\end{proposition}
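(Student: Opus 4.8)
The plan is to reduce everything to the subadditivity of the operator norm, played against the positivity bounds $O \leq \E_i(X) \leq I$, which guarantee $\no{\E_i(X)} \leq 1$ for $i=1,2$. Write $\E = \lambda\E_1 + (1-\lambda)\E_2$ with $\lambda \in (0,1)$; the degenerate values $\lambda \in \{0,1\}$ are excluded, since for a genuine convex combination both coefficients must be strictly positive (otherwise the conclusion about one of the two observables would be void). I want to verify the defining property of approximate sharpness for $\E_1$, namely that $\no{\E_1(X)} = 1$ for every open $X \subseteq \group$ with $\E_1(X) \neq O$; the case of $\E_2$ will then follow by symmetry.

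First I would fix such an open set $X$ and observe that $\E(X) \neq O$ as well. Indeed, $\E_1(X)$ being a nonzero positive operator, there is a vector $\psi$ with $\ip{\psi}{\E_1(X)\psi} > 0$, and then $\ip{\psi}{\E(X)\psi} \geq \lambda\ip{\psi}{\E_1(X)\psi} > 0$, because $(1-\lambda)\E_2(X) \geq O$ and $\lambda > 0$. Hence the hypothesis that $\E$ is approximately sharp applies to $X$ and yields $\no{\E(X)} = 1$.

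Next I would chain the triangle inequality with the bounds above:
\[
1 = \no{\E(X)} \leq \lambda\no{\E_1(X)} + (1-\lambda)\no{\E_2(X)} \leq \lambda + (1-\lambda) = 1 .
\]
Since the two ends coincide, all inequalities are equalities, so $0 = \lambda\big(1 - \no{\E_1(X)}\big) + (1-\lambda)\big(1 - \no{\E_2(X)}\big)$ is a sum of two nonnegative terms with strictly positive weights; each must vanish, giving $\no{\E_1(X)} = 1$ (and incidentally $\no{\E_2(X)} = 1$, which is consistent since here $\E_2(X) \neq O$). This is exactly the property required of $\E_1$.

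I do not expect a genuine obstacle: the one step deserving care is the passage from $\E_1(X) \neq O$ to $\E(X) \neq O$, because approximate sharpness only constrains the norm on sets where the effect does not vanish. This is precisely where the positivity of the effects, rather than mere self-adjointness, enters — it rules out any cancellation between $\lambda\E_1(X)$ and $(1-\lambda)\E_2(X)$.
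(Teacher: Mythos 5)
Your proof is correct and follows essentially the same route as the paper's: the triangle inequality bounds $\no{\E(X)}$ by $\lambda\no{\E_1(X)}+(1-\lambda)\no{\E_2(X)}\leq 1$, forcing both norms to equal $1$ when $\no{\E(X)}=1$. Your additional check that $\E_1(X)\neq O$ implies $\E(X)\neq O$ (via positivity) is a small but welcome refinement that the paper's proof leaves implicit.
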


\begin{proof}
Let $X\subseteq\group$ be an open set. As $\E$ is a convex combination of $\E_1$ and $\E_2$, we have $\E(X)=\lambda \E_1(X) + (1-\lambda) \E_2(X)$ for some $0<\lambda<1$. By the triangle inequality, we get
\begin{equation*}
\no{\E(X)} \le \lambda \no{\E_1(X)} + (1-\lambda) \no{\E_2(X)} \leq 1 \, .
\end{equation*}
Thus, if $\no{\E(X)}=1$, then $\no{\E_1(X)}=\no{\E_2(X)}=1$.
\end{proof}

%%%%%%%%%%%%%%%%%%%%%%%%
\subsection{Postprocessing}\label{sec:post}
%%%%%%%%%%%%%%%%%%%%%%%%

Let $\probG$ be the convex set of probability measures on $\borG$. As we earlier discussed, each observable $\E$ defined on $\borG$ determines an affine mapping $\Theta_\E$ from $\sh$ to $\probG$.

\begin{definition}\label{def:post}
Let $\E,\F \in\obs$. We say that $\F$ is a \emph{postprocessing} of
$\E$, and denote $\F\coar\E$, if there exists an affine mapping
$\Psi:M^+_1(\group) \to M^+_1(\group)$ such that 
\begin{equation}\label{eq:fpsie}
\Theta_\F=\Psi \circ \Theta_\E \, .
\end{equation}
\end{definition}

The relation $\coar$ on $\obs$ is reflexive and transitive, thus a preorder. We denote by $\csim$ the induced equivalence relation, i.e., $\E\csim\F$ if and only if $\F\coar\E \coar\F$.

\begin{definition}\label{def:post-clean}
An observable $\E\in\obs$ is \emph{postprocessing clean in $\obs$} if for every $\F\in\obs$, the following implication holds: 
\begin{equation}\label{eq:post-clean}
\E\coar\F\ \Rightarrow\ \E\csim\F \, .
\end{equation}
\end{definition}

The postprocessing relation and postprocessing cleaness have been studied in \cite{BuDaKePeWe05}, \cite{FQMEA02}, \cite{Heinonen05}, \cite{JePuVi08}, \cite{MaMu90a}. Some connections with the convex structure of the set of observables and the postprocessing relation were proved in \cite{JePu07}. In the following we derive some results for the postprocessing relation in the case of covariant observables.

For each probability measure $\nu$ on $\group$, we define a mapping $\Psi_\nu:\probG\to\probG$ by
\begin{equation}
\Psi_{\nu}(p)=\nu\ast p \, .
\end{equation}
Here $\nu\ast p$ denotes the convolution of these measures, i.e.,
\begin{equation*}
(\nu\ast p)(X) = \int \nu(g^{-1}X)\ d p(g)\, ,\quad X\in\borG \, .
\end{equation*}
For $\E\in\obs$ and $\nu\in\probG$, we denote by $\Enu$ the observable corresponding to the affine mapping $\Psi_\nu\circ\Theta_\E$. In other words, $\Enu$ is the observable defined by formula
\begin{equation}\label{eq:Enu}
\Enu(X)=\int \nu(g^{-1}X)\ d \E(g) \, ,\quad X\in\borG \, .
\end{equation}
By its definition, $\Enu$ is a postprocessing of $\E$. It is straightforward to check that the $U$-covariance of $\E$ implies that also $\Enu$ is $U$-covariant.

\begin{proposition}\label{prop:post}
Let $\E,\F\in\obs$. The following conditions are equivalent:
\begin{itemize}
\item[(i)] $\F\coar\E$ .
\item[(ii)] There is a probability measure $\nu$ such that $\F=\Enu$ .
\end{itemize}
\end{proposition}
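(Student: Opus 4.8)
The plan is to dispose of the easy implication (ii) $\Rightarrow$ (i) by inspection of the definition of $\Enu$, and then to reduce (i) $\Rightarrow$ (ii) to the identification of covariant postprocessing maps with convolutions. For (ii) $\Rightarrow$ (i), note that $\Enu$ is by construction the observable attached to the affine map $\Psi_\nu\circ\Theta_\E$, so $\Theta_{\Enu}=\Psi_\nu\circ\Theta_\E$ already exhibits $\Enu$ as a postprocessing of $\E$ with $\Psi=\Psi_\nu$. The substance is in the converse.

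For (i) $\Rightarrow$ (ii), the first step is to record how covariance manifests at the level of the maps $\Theta_\E,\Theta_\F$. Writing $\alpha_g(\varrho)=U(g)\varrho U(g)^\ast$ for the conjugation action on $\sh$ and $\tau_g$ for the translation $(\tau_g p)(X)=p(g^{-1}X)$ on $\probG$, a direct trace computation using $U(g)^\ast\F(X)U(g)=\F(g^{-1}X)$ yields the equivariance $\Theta_\E\circ\alpha_g=\tau_g\circ\Theta_\E$, and likewise for $\F$. Now suppose $\Theta_\F=\Psi\circ\Theta_\E$ with $\Psi$ affine. The map $\Psi$ need not be equivariant, so I would symmetrize it over the group by setting
\[
\bar\Psi:=\int_\group \tau_{g^{-1}}\circ\Psi\circ\tau_g\ d\mu(g),
\]
understood pointwise, so that $\bar\Psi(p)$ is the $\mu$-barycenter of the probability measures $(\tau_{g^{-1}}\circ\Psi\circ\tau_g)(p)$, again an element of $\probG$. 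Using the two equivariances together with $\Psi\circ\Theta_\E=\Theta_\F$, the translations telescope inside the integral and the identity $\bar\Psi\circ\Theta_\E=\Theta_\F$ survives the averaging; left-invariance of the normalized Haar measure $\mu$ then gives $\tau_h\circ\bar\Psi=\bar\Psi\circ\tau_h$ for every $h$, so $\bar\Psi$ is affine and translation-equivariant.

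The last step is to show that an affine, translation-equivariant map is exactly a convolution. Let $\delta_g$ denote the point mass at $g$ and put $\nu:=\bar\Psi(\delta_e)$. Since $\delta_g=\tau_g\delta_e$, equivariance gives $\bar\Psi(\delta_g)=\tau_g\nu=\nu\ast\delta_g$ for all $g$. Writing an arbitrary $p\in\probG$ as the barycenter $p=\int \delta_g\,dp(g)$ and pushing this through the affine map $\bar\Psi$ produces $\bar\Psi(p)=\int \nu\ast\delta_g\,dp(g)=\nu\ast p=\Psi_\nu(p)$, hence $\bar\Psi=\Psi_\nu$. Therefore $\Theta_\F=\Psi_\nu\circ\Theta_\E=\Theta_{\Enu}$, and since an observable is determined by its associated affine map $\Theta$, this forces $\F=\Enu$.

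I expect the main obstacle to lie in the two places where I must pass from the \emph{defining} property of an affine map (respecting \emph{finite} convex combinations) to genuine integrals against probability measures: the well-definedness and equivariance of $\bar\Psi$, which require measurability of $g\mapsto(\tau_{g^{-1}}\circ\Psi\circ\tau_g)(p)(X)$ (available from strong continuity of $U$ and continuity of the $\group$-action), and the commutation $\bar\Psi\bigl(\int\delta_g\,dp(g)\bigr)=\int\bar\Psi(\delta_g)\,dp(g)$ in the final identification. Both amount to the claim that the affine maps at hand respect barycenters of probability measures; I would justify this through the weak topology on $\probG$, in which the point-mass decomposition of $p$ and the barycenter integrals converge, so that affinity extends from finite mixtures to the integral representation.
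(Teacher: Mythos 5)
Your reduction of (i)$\Rightarrow$(ii) to the claim that every affine, translation-equivariant self-map of $\probG$ is a convolution $\Psi_\nu$ contains a genuine gap, and in fact that claim is false as stated. Consider the map $\Psi(p)=p_{\mathrm{ac}}+p_{\mathrm{s}}(\group)\,\mu$, where $p=p_{\mathrm{ac}}+p_{\mathrm{s}}$ is the Lebesgue decomposition of $p$ with respect to the Haar measure $\mu$. This $\Psi$ is affine, maps $\probG$ into itself, and commutes with every translation $\tau_g$ (both the decomposition and $\mu$ are translation invariant), but it is not a convolution: it fixes every absolutely continuous $p$ while sending every point mass to $\mu$, whereas your recipe would extract $\nu=\Psi(\delta_e)=\mu$ and hence the constant map $p\mapsto\mu$. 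The step that fails is exactly the one you flagged: passing from $\bar\Psi(\delta_g)=\nu\ast\delta_g$ to $\bar\Psi(p)=\nu\ast p$ by pushing the barycenter $p=\int\delta_g\,dp(g)$ through $\bar\Psi$ requires $\bar\Psi$ to be weak*-continuous (or at least barycenter-preserving), and an affine self-map of $\probG$ need not be either --- the example above is a witness, since $p_n\to\delta_e$ weak* with $p_n$ absolutely continuous gives $\Psi(p_n)=p_n\to\delta_e\neq\mu=\Psi(\delta_e)$.

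The paper circumvents this by never evaluating the symmetrized map at point masses. The key input is that for covariant observables the output distributions are absolutely continuous, $\Theta_\E(\sh)\subseteq L^1(\group)$, so only the restriction to $L^1(\group)$ of the linearly extended and group-averaged map $\breve\Psi$ matters. One shows that $\breve\Psi$ is a bounded translation-commuting operator from $L^1(\group)$ to $M(\group)$ whose range actually lies in $L^1(\group)$ (by a theorem of Rudin), and then Wendel's theorem identifies such an operator as convolution by some $\nu\in M^+_1(\group)$. That theorem is the missing ingredient in your argument; it is genuinely a statement about $L^1(\group)$, and the example shows it cannot be upgraded to all of $M(\group)$ nor replaced by the barycenter argument. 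Your symmetrization is essentially the paper's first move, so the proof can be repaired by replacing the final ``evaluate at $\delta_e$'' step with the restriction to $L^1(\group)$ together with Wendel's theorem.
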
  

\begin{proof}
As explained previously, (ii) implies (i). 

Conversely, suppose
$\Psi:M^+_1(\group) \to M^+_1(\group)$ is an affine mapping such that Eq.~\eqref{eq:fpsie} holds. The mapping $\Psi$ extends uniquely to a bounded linear mapping $\Psi:M(\group) \to M(\group)$, where $M(\group)$ is the Banach space of Borel complex measures on $G$,
 by setting
\begin{eqnarray*}
\Psi (m) &=& \no{m_{1+}} \Psi \left( \frac{m_{1+}}{\no{m_{1+}}} \right) - \no{m_{1-}} \Psi \left( \frac{m_{1-}}{\no{m_{1-}}} \right) \\
&& + i \no{m_{2+}} \Psi \left( \frac{m_{2+}}{\no{m_{2+}}} \right) - i \no{m_{2-}} \Psi \left( \frac{m_{2-}}{\no{m_{2-}}} \right) \, .
\end{eqnarray*}
Here $m = m_{1+} - m_{1-} + im_{2+} - im_{2-}$ is the Lebesgue decomposition of a measure $m$ and $\| m \|$ is the total variation norm of $m\in M(\group)$
(in the above formula, $0/0 = 0$ is assumed).

For each $g\in\group$, we define the following linear isometric isomorphism $\Lambda_g: M(G) \to M(G)$
\begin{equation*}
\int_G f(h) \de (\Lambda_g m) (h) := \int f(gh) \de m (h) \quad \forall f\in C_0 (G) \, .
\end{equation*}
The mapping $g\mapsto \Lambda_g m$ is weak*-continuous, and it is norm continuous if $m \in L^1 (G)$. In this case, the mapping $g\mapsto \Lambda_g \Psi \Lambda_{g^{-1}} m$ is weak*-continuous. In fact, if $f\in C (G)$, then
\begin{eqnarray*}
&& \left| \int f(x) \de (\Lambda_{g} \Psi \Lambda_{g^{-1}} m) (x) - \int f(x) \de (\Lambda_{h} \Psi \Lambda_{h^{-1}} m) (x) \right| \\
&& \qquad \qquad \leq \left| \int f(gx) \de (\Psi \Lambda_{g^{-1}} m - \Psi \Lambda_{h^{-1}} m) (x) \right| + \left| \int [f(gx) - f(hx)] \de (\Psi \Lambda_{h^{-1}} m) (x) \right| \\
&& \qquad \qquad \leq \no{f}_\infty \no{\Psi \Lambda_{g^{-1}} m - \Psi \Lambda_{h^{-1}} m}_M + \no{f(g\cdot) - f(h\cdot)}_\infty \no{\Psi \Lambda_{h^{-1}} m}_M \\
&& \qquad \qquad \leq \no{f}_\infty \no{\Psi} \no{\Lambda_{g^{-1}} m - \Lambda_{h^{-1}} m}_M + \no{f(g\cdot) - f(h\cdot)}_\infty \no{\Psi} \no{m}_M ,
\end{eqnarray*}
and the mappings $g\mapsto \Lambda_{g^{-1}} m$ and $g\mapsto f(g\cdot)$ are both norm continuous.

If $m \in L^1 (G)$, then for all $f\in C (G)$ the mapping $g\mapsto \int f(x) \de (\Lambda_{g} \Psi \Lambda_{g^{-1}} m) (x)$ is $\mu$-integrable, and we have
$$
\left| \int \left[ \int f(x) \de (\Lambda_{g} \Psi \Lambda_{g^{-1}} m) (x) \right] \de g \right| \leq \no{f}_\infty \no{\Psi} \no{m}_M .
$$
This shows that there exists a measure $\breve{\Psi} m \in M(G)$ such that
\begin{equation*}
\int \left[ \int f(x) \de (\Lambda_{g} \Psi \Lambda_{g^{-1}} m) (x) \right] \de g = \int f(x) \de (\breve{\Psi} m ) (x) \qquad \forall f\in C (G),
\end{equation*}
and the linear mapping $m \mapsto \breve{\Psi} m$ is bounded from $L^1 (G)$ into $M(G)$.

Suppose $m \in L^1 (G) \cap M_1^+ (G)$. If $\{ f_n \}_{n\in \N} \subset C (G)$ is such that $f_n \geq 0$ and $f_n \uparrow 1$, then $\int f_n(x) \de (\Lambda_{g} \Psi \Lambda_{g^{-1}} m) (x) \uparrow 1$ since $\Lambda_{g} \Psi \Lambda_{g^{-1}} m \in M_1^+ (G)$. Therefore, $\int f_n (x) \de (\breve{\Psi} m ) (x) \uparrow 1$ by dominated convergence theorem, thus showing that $\breve{\Psi} m \in M^+_1 (G)$.

Clearly, $\Lambda_g \breve{\Psi} m = \breve{\Psi} \Lambda_g m$ for all $m\in L^1 (G)$. In particular, the map $g\mapsto \Lambda_g \breve{\Psi} m$ is continuous, hence $\breve{\Psi} m \in L^1 (G)$ by Theorem 1.6 in \cite{Rudin59} (which is unaltered even if $G$ is not Abelian). By a result of Wendel \cite{Wendel52}, there exists $\nu\in M^+_1 (G)$ such that $\breve{\Psi}(\phi) = \nu \ast \phi$ for all $\phi \in L^1 (G)$.

Since $\Theta_\E (\sh) , \Theta_\F (\sh) \subseteq L^1 (G)$ (see e.g. \cite{HeLaPePuYl03}), and $\Lambda_g \Theta_\F = \Psi \Lambda_g \Theta_\E$ by covariance of $\E$ and $\F$, we have $\Theta_\F = \breve{\Psi} \circ\Theta_\E$. We conclude that $\F=\Enu$, and thus, (i) implies (ii).
\end{proof}

Let the probability measure $\nu$ in \eqref{eq:Enu} be the Dirac measure $\delta_{x}$ in some point $x\in\group$. In this case, the corresponding observable $\E_{\delta_{x}}$ has the form
\begin{equation}
\E_{\delta_x}(X)=\E(Xx^{-1})\equiv\E_x(X) \, .
\end{equation}
This observable, which we denote by $\E_x$, is therefore just a translated version of $\E$. It is clear that $\E \csim \E_x$. If $G$ is an Abelian group, then $\E_{x}(X)=\E(x^{-1}X)=U(x^{-1})\E(X)U(x)$ and therefore $\E_{x}\sim_U \E$.

\begin{proposition}\label{prop:post-approx-sharp}
Let $\E\in\obs$ and $\nu\in\probG$. The following conditions are equivalent:
\begin{itemize}
\item[(i)] $\Enu$ is approximately sharp.
\item[(ii)] $\E$ is approximately sharp and $\nu=\delta_{x}$ for some $x\in\group$ (i.e. $\Enu=\E_x$). 
\end{itemize}
\end{proposition}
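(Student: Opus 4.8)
The plan is to prove the two implications separately; (ii)$\Rightarrow$(i) is immediate, while (i)$\Rightarrow$(ii) carries all the work. For (ii)$\Rightarrow$(i), if $\nu=\delta_x$ then $\Enu=\E_x$ with $\E_x(X)=\E(Xx^{-1})$. Since right translation $Y\mapsto Yx^{-1}$ is a homeomorphism of $\group$ carrying nonempty open sets to nonempty open sets, and $\no{\E_x(X)}=\no{\E(Xx^{-1})}$, the approximate sharpness of $\E$ transfers verbatim to $\Enu$.

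For (i)$\Rightarrow$(ii), I would first record an operator-norm bound. Writing $\Enu(X)=\int_\group\nu(g^{-1}X)\,d\E(g)$ with integrand in $[0,1]$ and setting $M_X:=\sup_{g\in\group}\nu(g^{-1}X)$, the operator $M_X I-\Enu(X)=\int_\group(M_X-\nu(g^{-1}X))\,d\E(g)$ is positive, since its integrand is nonnegative and $\E$ is a positive operator measure; combined with $\Enu(X)\ge O$ this yields $\no{\Enu(X)}\le M_X$. Because $\Enu$ is $U$-covariant, approximate sharpness means $\no{\Enu(X)}=1$ for every nonempty open $X$, while $M_X\le 1$ as $\nu$ is a probability measure. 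Hence $\sup_{g\in\group}\nu(g^{-1}X)=1$ for every nonempty open $X\subseteq\group$.

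The crucial step is to deduce that $\nu$ is a Dirac measure. Fixing any $y\in\group$ and using that $\group$ is compact, Hausdorff and second countable (hence metrizable), I would pick open neighborhoods $X_n\downarrow\{y\}$ and, for each $n$, choose $g_n$ with $\nu(g_n^{-1}X_n)>1-1/n$. By compactness a subsequence satisfies $g_n\to g$; put $z:=g^{-1}y$. Joint continuity of the group operations gives, for every open $W\ni z$, neighborhoods $A\ni g^{-1}$ and $B\ni y$ with $AB\subseteq W$, so that $g_n^{-1}\in A$ and $X_n\subseteq B$ eventually force $g_n^{-1}X_n\subseteq W$; thus $\nu(W)\ge\nu(g_n^{-1}X_n)\to 1$. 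Hence $\nu(W)=1$ for every neighborhood $W$ of $z$, and continuity from above along a basis $W_k\downarrow\{z\}$ gives $\nu(\{z\})=1$, i.e.\ $\nu=\delta_z$. Setting $x:=z$ we obtain $\Enu=\E_x$, and the approximate sharpness of $\E$ then follows by running the homeomorphism argument of (ii)$\Rightarrow$(i) in reverse, since $\E(Y)=\E_x(Yx)$.

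I expect the main obstacle to be this extraction argument: one must control the left translates $g_n^{-1}X_n$ of shrinking neighborhoods simultaneously, which requires the compactness of $\group$ to produce the limit point $g$ and the joint continuity of multiplication and inversion to localize $g_n^{-1}X_n$ near $z$. In particular one cannot simply translate metric balls, since the metric on $\group$ need not be translation invariant. By contrast, the operator-norm bound and the reverse implication are comparatively routine.
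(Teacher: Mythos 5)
Your proof is correct, and its skeleton matches the paper's: the easy implication and the final step both rest on the identity $\no{\E_{\delta_x}(Xx)}=\no{\E(X)}$ together with the fact that right translation is a homeomorphism, and the hard implication starts from the operator bound $\no{\Enu(X)}\le\sup_{g\in\group}\nu(g^{-1}X)$, exactly as in the paper. Where you genuinely diverge is in the central step, showing that $\nu$ must be a Dirac measure. The paper argues by contradiction via a separation argument: it fixes a \emph{left-invariant} metric on $\group$ (citing Hewitt--Ross), takes two distinct points $x,y\in\supp\nu$ and $r=\tfrac15 d(x,y)$, and notes that any left translate $gB(x;r)=B(gx;r)$ is again an $r$-ball and hence misses at least one of the disjoint balls $B(x;r)$, $B(y;r)$, so $\sup_g\nu(gB(x;r))\le 1-\alpha<1$. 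You instead argue directly by a concentration--compactness extraction: from $\sup_g\nu(g^{-1}X_n)=1$ along a shrinking neighborhood basis $X_n\downarrow\{y\}$ you choose near-maximizing $g_n$, pass to a convergent subsequence $g_n\to g$ (sequential compactness of the compact metrizable group), and use joint continuity of multiplication and inversion to squeeze $g_n^{-1}X_n$ into any prescribed neighborhood of $z=g^{-1}y$, forcing $\nu=\delta_z$. Your route dispenses with the left-invariant metric entirely (plain metrizability suffices, for a countable neighborhood basis) but leans on compactness for the extraction; the paper's route is shorter once the invariant metric is invoked and does not need to extract a limit point. Both arguments are sound, and your ancillary steps (positivity of $M_XI-\Enu(X)$, the reduction of approximate sharpness to nonempty open sets via covariance, and continuity from above to get $\nu(\{z\})=1$) are all correctly justified.
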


\begin{proof}
Let us first note that since $\group$ is Hausdorff and second countable, it is metrizable and we can choose a left invariant metric $d$ for $\group$, i.e., $d(gx,gy)=d(x,y)$ for every $g,x,y\in\group$ (see, for instance, Theorem 8.3 in \cite{AHAI63} for this fact). 

Suppose that (i) holds. We make a counter assumption that $\nu$ is not a Dirac measure, i.e., there are two different points $x,y$ in the support of $\nu$. Denote $r:=\frac{1}{5} d(x,y)$. Then the open balls $B(x;r)$ and $B(y;r)$ are disjoint and have
positive $\nu$-measure. Thus, if we set $\alpha=\min\{\nu(B(x;r)),\nu(B(y;r))\}$, then $\alpha>0$.
As the metric $d$ is left invariant, we have $gB(x;r)=B(g x;r)$ for any
$g\in\group$. Moreover, for any $g\in\group$, we have
$B(g x;r)\cap B(x;r)=\emptyset$ or
$B(g x;r)\cap B(y;r)=\emptyset$. This
implies that $\nu(g B(x;r))\leq 1-\alpha$ for all
$g\in\group$. It then follows that
$$
\no{\Enu(B(x;r))}=\no{\int_{\group}\nu(g^{-1}B(x;r))\ d \E(g)}\leq
\sup_{g\in\group}\nu(g^{-1}B(x;r)) \cdot \no{\E(\group)} \leq
1-\alpha<1.
$$
This is in contradiction with our assumption that (i) holds. Hence, $\nu$ is the Dirac measure $\delta_{x}$ for some point $x\in\group$. For every $x\in G$ and $X\in\borG$, we have 
\begin{equation}\label{eq:norm-gX}
\no{\E_{\delta_x}(Xx)}=\no{\E(X)} \, .
\end{equation}
Therefore, $\no{\E(X)}=1$ for every open set $X\subseteq\group$ since the right multiplication is a homeomorphism. This shows that (i)$\Rightarrow$(ii).

The fact that (ii)$\Rightarrow$(i) is clear from \eqref{eq:norm-gX}.
\end{proof}

As a consequence of Propositions \ref{prop:post} and \ref{prop:post-approx-sharp}, we get the following result.

\begin{proposition}\label{prop:3}
Let $\E\in\obs$ be an approximately sharp observable. Then $\E$ is postprocessing clean in $\obs$. Another observable $\F\in\obs$ is postprocessing equivalent with $\E$ if and only if $\F=\E_x$ for some $x\in\group$.
\end{proposition}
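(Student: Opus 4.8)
The plan is to reduce the whole statement to Propositions \ref{prop:post} and \ref{prop:post-approx-sharp}, which together already encode the two hard facts: that every postprocessing of a covariant observable is of the form $\E_\nu$, and that such a postprocessing can be approximately sharp only when $\nu$ is a Dirac measure. Once these are in hand, the proof is essentially bookkeeping.

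First I would establish postprocessing cleanness. Take an arbitrary $\F\in\obs$ with $\E\coar\F$ and aim to show $\E\csim\F$. Since $\E\coar\F$ means $\E$ is a postprocessing of $\F$, Proposition \ref{prop:post} (applied with $\F$ as the base observable) yields a probability measure $\nu$ with $\E=\F_\nu$. Here the hypothesis that $\E$ is approximately sharp is the crucial lever: applying Proposition \ref{prop:post-approx-sharp} to the pair $(\F,\nu)$, the approximate sharpness of $\F_\nu=\E$ forces $\nu=\delta_x$ for some $x\in\group$ (and, as a byproduct, $\F$ to be approximately sharp as well). Hence $\E=\F_x$, and since a translate is always postprocessing equivalent to the original, i.e.\ $\F\csim\F_x$ as noted in the text, we conclude $\E\csim\F$. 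This is exactly the implication \eqref{eq:post-clean}, so $\E$ is postprocessing clean.

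Next I would treat the characterization of postprocessing equivalence. The ``if'' direction is immediate from the same remark: if $\F=\E_x$, then $\E\csim\E_x=\F$. For the ``only if'' direction I would start from $\E\csim\F$, which in particular contains $\E\coar\F$, and run the identical argument as above to obtain $\E=\F_x$ for some $x\in\group$. It then remains only to rewrite this in the asserted form: using the convention $\F_x(X)=\F(Xx^{-1})$, the equality $\E(X)=\F(Xx^{-1})$ for all $X$ is equivalent, after substituting $Y=Xx^{-1}$, to $\F(Y)=\E(Yx)=\E_{x^{-1}}(Y)$, so that $\F=\E_{x^{-1}}$ is indeed a translate of $\E$.

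There is no analytic obstacle, since both genuinely hard steps (the existence of the measure $\nu$ and the Dirac-measure dichotomy) are already packaged in the two cited propositions. The only points requiring care are keeping straight which observable plays the role of the base in Proposition \ref{prop:post-approx-sharp} and the direction of the relation $\coar$, together with the final algebraic rearrangement showing that $\E=\F_x$ produces the translation parameter $x^{-1}$ rather than $x$; this last fact is forced by the left-translation convention built into the definition of $\E_x$.
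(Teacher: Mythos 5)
Your proof is correct and follows exactly the same route as the paper: apply Proposition \ref{prop:post} to get $\E=\F_\nu$, then Proposition \ref{prop:post-approx-sharp} to force $\nu=\delta_x$, and conclude both cleanness and the characterization of the equivalence class. You are in fact slightly more explicit than the paper's proof (which dispatches the second claim with ``this reasoning also proves the second claim'') in carrying out the rearrangement $\E=\F_x\Leftrightarrow\F=\E_{x^{-1}}$, and that bookkeeping is done correctly.
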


\begin{proof}
Let $\E\in\obs$ be an approximately sharp observable. If $\F\in\obs$ is such that $\E\coar\F$, then by Prop.~\ref{prop:post} we have $\E=\F_{\nu}$ for some $\nu\in\probG$.  But Prop.~\ref{prop:post-approx-sharp} now implies that $\nu=\delta_x$ for some $x\in\group$, which means, in particular, that $\E\csim\F$. Therefore, $\E$ is postprocessing clean. This reasoning also proves the second claim.
\end{proof}

%%%%%%%%%%%%%%%%%%%%%%%
\subsection{Preprocessing}\label{sec:pre}
%%%%%%%%%%%%%%%%%%%%%%%

Let $\Phi:\trh\to\trh$ be a linear mapping such that its adjoint $\Phi^*:\lh\to\lh$ is completely positive and $\Phi^*(I)=I$. We say that $\Phi$ is a {\it channel} and denote the set of all channels by $\ch$. Let $\CPH$ be the set of the adjoints of channels, that is, $\CPH$ consists of normal completely positive maps $\Xi : \lh \frecc \lh$ such that $\Xi(I) = I$. 

\begin{definition}\label{def:pre}
Let $\E,\F \in\obs$. We say that $\F$ is a \emph{preprocessing} of
$\E$, and denote $\F\dis\E$, if there exists a channel
$\Phi:\trh \to \trh$ such that 
\begin{equation}\label{eq:pre}
\Theta_\F=\Theta_\E\circ\Phi.
\end{equation}
\end{definition}

Written in terms of the adjoint channel $\Phi^\ast$, Eq.~\eqref{eq:pre} amounts to say that $\F (X) = \Phi^\ast (\E(X))$ for all $X\in\borG$.

The relation $\dis$ on $\obs$ is clearly a preorder and we denote by $\dsim$ the induced equivalence relation. 

\begin{definition}\label{def:pre-clean}
An observable $\E\in\obs$ is \emph{preprocessing clean in $\obs$} if for every $\F\in\obs$, the following implication holds: 
\begin{equation}\label{eq:pre-clean}
\E\dis\F\ \Rightarrow\ \E\dsim\F \, .
\end{equation}
\end{definition}

We have some simple connections of the preprocessing relation to the other relations.

\begin{proposition}\label{p6}
If $\E,\F\in\obs$ and $\E\sim\F$, then $\E\dsim \F$. 
\end{proposition}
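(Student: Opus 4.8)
The plan is to show that equivalence $\E\sim\F$ of two $U$-covariant observables implies preprocessing equivalence $\E\dsim\F$, i.e.\ that $\E\dis\F$ and $\F\dis\E$ both hold. By symmetry of the hypothesis it suffices to exhibit a channel $\Phi$ realizing $\F\dis\E$ (or equivalently $\E\dis\F$), since interchanging the roles of $\E$ and $\F$ gives the other direction. Recall from the discussion after Definition \ref{def:pre} that $\F\dis\E$ means precisely that there is a channel $\Phi:\trh\to\trh$ whose adjoint $\Phi^\ast$ is a unital normal completely positive map satisfying $\F(X)=\Phi^\ast(\E(X))$ for all $X\in\borG$.

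The key observation is that the unitary conjugation supplied by $\E\sim\F$ is itself (the adjoint of) a channel. Since $\E\sim\F$, there is a unitary $W\in\lh$ with $\F(X)=W\E(X)W^\ast$ for all $X\in\borG$. First I would define $\Phi^\ast:\lh\to\lh$ by $\Phi^\ast(A)=WAW^\ast$. This map is clearly linear, normal (weak*-continuous), unital since $\Phi^\ast(I)=WIW^\ast=WW^\ast=I$, and completely positive because conjugation by a single operator is the simplest instance of a Kraus/Stinespring form. Hence $\Phi^\ast\in\CPH$, so its predual $\Phi:\trh\to\trh$, given by $\Phi(\varrho)=W^\ast\varrho W$, is a genuine channel. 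By construction $\F(X)=\Phi^\ast(\E(X))$ for all $X$, which is exactly the adjoint form of $\F\dis\E$.

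For the reverse direction I would run the identical argument with the inverse unitary: setting $\Phi'^\ast(A)=W^\ast AW$ gives another unital normal completely positive map, with predual channel $\Phi'(\varrho)=W\varrho W^\ast$, satisfying $\E(X)=\Phi'^\ast(\F(X))$ for all $X\in\borG$; this establishes $\E\dis\F$. Having both $\F\dis\E$ and $\E\dis\F$ yields $\E\dsim\F$ by definition of the induced equivalence relation, completing the proof.

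I expect no serious obstacle here: the whole content is the fact that unitary conjugation is a channel and that unitary equivalence is therefore symmetric within the preprocessing preorder. The only points requiring a word of care are verifying that $A\mapsto WAW^\ast$ is normal and completely positive (both immediate from it being conjugation by a fixed bounded operator) and correctly matching the convention for which of $\Phi$, $\Phi^\ast$ carries the unitary versus its adjoint, so that the relation $\F(X)=\Phi^\ast(\E(X))$ comes out in the right direction. Notably, covariance of $\E$ and $\F$ plays no role in this particular statement; it is a purely structural consequence of unitary equivalence, which is why the proof is short.
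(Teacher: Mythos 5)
Your proof is correct and matches the paper's own argument exactly: both exhibit the unitary conjugation maps $T\mapsto W^\ast TW$ and $T\mapsto WTW^\ast$ as channels realizing $\F\dis\E$ and $\E\dis\F$, respectively. No gaps; the paper's version is simply a more compressed statement of the same idea.
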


\begin{proof}
The relation $\E\sim\F$ means that $\F(\cdot)=W\E(\cdot)W^*$ for some unitary operator $W$. Define unitary channels $\Phi_1(T)=W^*TW$ and $\Phi_2(T)=WTW^\ast$. Then $\Theta_\F=\Theta_\E\circ\Phi_1$ and $\Theta_\E=\Theta_\F\circ\Phi_2$. Hence, $\E\dsim \F$. 
\end{proof}

\begin{proposition}\label{prop:pre-approx-sharp}
Let $\E,\F\in\obs$ and $\F\dis\E$. If $\F$ is approximately sharp, then also $\E$ is approximately sharp.
\end{proposition}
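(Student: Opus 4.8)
The plan is to convert the preprocessing hypothesis into a single operator identity relating $\F$ and $\E$, and then to exploit the elementary fact that a positive unital map is norm-contractive. First I would unpack $\F\dis\E$: by Definition \ref{def:pre} there is a channel $\Phi:\trh\to\trh$ with $\Theta_\F=\Theta_\E\circ\Phi$, and, as observed immediately after that definition, this is equivalent to
\begin{equation*}
\F(X)=\Phi^\ast(\E(X))\qquad\text{for all }X\in\borG,
\end{equation*}
where $\Phi^\ast\in\CPH$ is the adjoint channel, a (completely) positive map with $\Phi^\ast(I)=I$.

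Next I would record the key norm inequality. Each $\E(X)$ satisfies $O\leq\E(X)\leq I$, so it is self-adjoint with $-\no{\E(X)}\,I\leq\E(X)\leq\no{\E(X)}\,I$. Applying the positive, unital map $\Phi^\ast$ to this operator inequality yields $-\no{\E(X)}\,I\leq\Phi^\ast(\E(X))\leq\no{\E(X)}\,I$, and hence
\begin{equation*}
\no{\F(X)}=\no{\Phi^\ast(\E(X))}\leq\no{\E(X)}.
\end{equation*}
Since $\E(X)\leq I$ gives $\no{\E(X)}\leq 1$, this sandwiches $\no{\F(X)}\leq\no{\E(X)}\leq 1$.

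Finally I would bring in the approximate sharpness of $\F$. Because $\group$ is compact and $\F$ is $U$-covariant, $\F(X)=\nil$ precisely when $\mu(X)=0$; as any non-empty open $X\subseteq\group$ has $\mu(X)>0$, it follows that $\F(X)\neq\nil$ and therefore $\no{\F(X)}=1$. The sandwich then forces $1=\no{\F(X)}\leq\no{\E(X)}\leq 1$, so $\no{\E(X)}=1$ for every non-empty open $X$. Since $\E$ is likewise $U$-covariant, $\E(X)\neq\nil$ for every non-empty open $X$, and this is exactly the condition making $\E$ approximately sharp.

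I do not expect a genuine obstacle here: the only step needing care is the inequality $\no{\Phi^\ast(A)}\leq\no{A}$ for self-adjoint $A$, which is the standard contractivity of a positive unital map and does not even require the full strength of complete positivity. The covariance bookkeeping (translating ``$\no{\cdot}=1$ on open sets with $\F(X)\neq\nil$'' into ``$\no{\cdot}=1$ on all non-empty open sets'') is routine given the characterization $\F(X)=\nil\iff\mu(X)=0$ already used in Section \ref{sec:sharp}.
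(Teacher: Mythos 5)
Your proof is correct and follows essentially the same route as the paper: both reduce the claim to the inequality $\no{\F(X)}=\no{\Phi^\ast(\E(X))}\leq\no{\E(X)}$, which the paper obtains via the dual characterization $\no{\E(X)}=\sup_{\varrho\in\sh}\tr{\varrho\E(X)}$ and the fact that $\Phi$ maps states to states, while you obtain it from order-preservation and unitality of $\Phi^\ast$ --- two formulations of the same contractivity fact. Your extra bookkeeping with $\E(X)=\nil\iff\mu(X)=0$ is a slightly more careful treatment of the definition of approximate sharpness than the paper gives, but not a different argument.
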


\begin{proof}
For any $X\in\borO$, we get
\begin{equation*}
\no{\F(X)}=\sup_{\varrho\in\sh}\tr{\varrho\F(X)}=\sup_{\varrho\in\sh}\tr{\Phi(\varrho)\E(X)}\leq \sup_{\varrho\in\sh}\tr{\varrho\E(X)}=\no{\E(X)}.
\end{equation*}
Thus, $\no{\F(X)}=1$ implies that $\no{\E(X)}=1$.
\end{proof}

A channel $\Phi$ is called $U$-\emph{covariant} if
\begin{equation}
\Phi(U(g) \varrho U(g)^\ast) = U(g) \Phi(\varrho) U(g)^\ast 
\end{equation}
for every $\varrho\in\sh$ and $g\in G$.

\begin{proposition}\label{prop:channel-covariant}
Let $\E,\F\in\obs$ and $\F\dis\E$. Then there is a $U$-covariant channel $\Phi$ such that \eqref{eq:pre} holds. 
\end{proposition}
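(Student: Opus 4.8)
The plan is to produce the required covariant channel by averaging an arbitrary one over $\group$. By Definition~\ref{def:pre} the hypothesis $\F\dis\E$ furnishes a channel $\Phi_0:\trh\to\trh$ whose adjoint $\Xi_0:=\Phi_0^\ast\in\CPH$ satisfies $\Xi_0(\E(X))=\F(X)$ for all $X\in\borG$ (this is the Heisenberg form of \eqref{eq:pre} noted after Definition~\ref{def:pre}). Since $\Xi_0$ need not intertwine $U$, I would symmetrize it by setting
\begin{equation*}
\Xi(A):=\int_\group U(g)^\ast\,\Xi_0\big(U(g)AU(g)^\ast\big)\,U(g)\,\de\mu(g),\qquad A\in\lh,
\end{equation*}
and take $\Phi$ to be the pre-adjoint of $\Xi$. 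Working at the adjoint level is convenient because $U$-covariance of the channel $\Phi$ is equivalent to the intertwining relation $\Xi(U(g)AU(g)^\ast)=U(g)\Xi(A)U(g)^\ast$.

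First I would make the integral meaningful. As $\group$ is compact, $U$ is strongly continuous and $\Xi_0$ is bounded, so for each fixed $A$ the integrand $g\mapsto U(g)^\ast\Xi_0(U(g)AU(g)^\ast)U(g)$ is bounded and weak$^\ast$-continuous; the integral is then defined through the dual pairing with $\trh$, i.e.\ $\tr{\varrho\,\Xi(A)}:=\int_\group\tr{\varrho\,U(g)^\ast\Xi_0(U(g)AU(g)^\ast)U(g)}\,\de\mu(g)$ for $\varrho\in\trh$, which is a bounded linear functional of $\varrho$ and hence determines $\Xi(A)\in\lh$. Next I would verify that $\Xi\in\CPH$: for each fixed $g$ the map $A\mapsto U(g)^\ast\Xi_0(U(g)AU(g)^\ast)U(g)$ is a composition of the unital conjugations $A\mapsto U(g)AU(g)^\ast$ and $B\mapsto U(g)^\ast BU(g)$ with the completely positive unital map $\Xi_0$, hence is itself completely positive and unital (using $\Xi_0(I)=I$). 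Integrating against the probability measure $\mu$ preserves complete positivity and unitality, and a routine argument preserves normality, so $\Xi$ is the adjoint of a unique channel $\Phi\in\ch$.

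The covariance of $\Phi$ would then follow from invariance of the Haar measure. Substituting $g\mapsto gh$ in the defining integral and using $U(gh)=U(g)U(h)$ together with the unimodularity of the compact group $\group$, one computes
\begin{equation*}
\Xi(U(h)AU(h)^\ast)=\int_\group U(h)U(g)^\ast\,\Xi_0(U(g)AU(g)^\ast)\,U(g)U(h)^\ast\,\de\mu(g)=U(h)\,\Xi(A)\,U(h)^\ast,
\end{equation*}
which is exactly the Heisenberg form of $U$-covariance.

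Finally I would check that $\Phi$ still realizes the preprocessing, i.e.\ that $\Xi(\E(X))=\F(X)$, so that \eqref{eq:pre} holds. Using the covariance $U(g)\E(X)U(g)^\ast=\E(gX)$ of $\E$, then $\Xi_0(\E(gX))=\F(gX)$, and finally the covariance $\F(gX)=U(g)\F(X)U(g)^\ast$ of $\F$, the integrand collapses to the constant $\F(X)$:
\begin{equation*}
\Xi(\E(X))=\int_\group U(g)^\ast\,\F(gX)\,U(g)\,\de\mu(g)=\int_\group\F(X)\,\de\mu(g)=\F(X),
\end{equation*}
since $\mu(\group)=1$. The main obstacle is purely technical: making the operator-valued integral rigorous and confirming that normality and complete positivity survive the averaging, since these are precisely the properties that qualify $\Xi$ as the adjoint of a channel. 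Once the averaged map is in place, the covariance and the preservation of the preprocessing relation are the short computations displayed above, relying only on Haar invariance and the $U$-covariance of $\E$ and $\F$.
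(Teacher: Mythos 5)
Your proposal is correct and follows essentially the same route as the paper: both average (twirl) the given channel over $\group$ with the normalized Haar measure, verify that the averaged map is still a channel, obtain covariance from the invariance of the Haar measure, and use the covariance of $\E$ and $\F$ to check that the preprocessing relation survives the averaging. The only cosmetic differences are that you work on the Heisenberg side with the conjugation convention reversed (which agrees with the paper's after the substitution $g\mapsto g^{-1}$) and verify $\Xi(\E(X))=\F(X)$ directly rather than through $\Theta_\E$ and $\Theta_\F$.
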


\begin{proof}
Let $\Phi$ be a channel such that \eqref{eq:pre} holds. Since $\E$ and $\F$ are covariant, we have
\begin{equation}\label{10}
\Theta_\F(U(g)^\ast \varrho U(g))=\Theta_\E(U(g)^\ast \Phi(\varrho) U(g))
\end{equation}
for every $\varrho\in\sh$ and $g\in G$. We define a linear mapping $\breve{\Phi}:\trh\to\trh$ by setting
\begin{equation}
\breve{\Phi}(\varrho)=\int U(g)\Phi(U(g)^\ast \varrho U(g)) U(g)^\ast\ dg \, .
\end{equation}

Let $\Phi^*$ and $\breve\Phi^*$ be the adjoints of $\Phi$ and $\breve\Phi$.
Hence,
\begin{equation}
\breve{\Phi}^*(A)=\int U(g)\Phi^*(U(g)^\ast A U(g)) U(g)^\ast\ dg \, ,\quad A\in\lh \, .
\end{equation}
%Let $n$ be a positive integer and let $\Theta^{(n)}:\,M_n(\lh)\to M_n(\lh)$ and $\breve\Theta^{(n)}:\,M_n(\lh)\to M_n(\lh)$ be the inflations of $\Theta$ and $\breve\Theta$, respectively; here we have denoted the $n\times n$-matrix $C^*$-algebra of $\lh$ by $M_n(\lh)$ and, for example, $\Theta^{(n)}\big((A_{ij})_{i,j=1}^n\big)=\big(\Theta(A_{ij})\big)_{i,j=1}^n$, $(A_{ij})\in M_n(\lh)$. A matrix $(A_{ij})\in M_n(\lh)$ is positive if and only if, for all $i,j$, $A_{ij}=\sum_k B^*_{ki} B_{kj}$ for some (finite) matrix $(B_{ki})$ of operators in $\lh$.Hence, it suffices to test the positivity of an inflation only for matrices of the form $(A_i^*A_j)_{i,j=1}^n$ where $A_i\in\lh$ for all $i=1,...,n$.
The fact that $\Phi^*$ is completely positive equals with 
\begin{equation}\label{CP}
\sum_{i,j=1}^n\langle\psi_i|\Phi^*(A_i^*A_j)\psi_j\rangle\ge 0
\end{equation}
for all $n=1,2,...,$ $A_1,...,A_n\in\lh$, and $\psi_1,...,\psi_n\in\hh$.
Since
$$
\sum_{i,j=1}^n\langle\psi_i|\breve\Phi^*(A_i^*A_j)\psi_j\rangle
=\int \left[
\sum_{i,j=1}^n\langle U(g)^*\psi_i|\Phi^*
\big((A_i U(g))^*(A_j U(g))
\big) U(g)^*\psi_j\rangle\right]\ dg
\ge 0
$$
it follows that $\breve\Phi^*$ is completely positive.
But  $\breve\Phi^*(I)=I$
so that
$\Breve\Phi$ is a channel.
Using the invariance of the Haar integral, it is straightforward to verify that $\breve{\Phi}$ is $U$-covariant. By Eq.~\eqref{10}, $\Theta_{\F} = \Theta_{\E} \circ \breve\Phi$.
\end{proof}

%Note that if $\Theta_\F=\Theta_\E\circ\Phi$ and $\Phi$ is already a $U$-covariant channel,
%then the construction in the proof of the above proposition gives $\breve\Phi=\Phi$.
%In particular, if $\E\sim_U\F$, that is, $\F=W\E W^*$ for some unitary $W$ for which $[W,U(G)]=O$,
%then the channel $\Phi(T)=W^*TW$ is a $U$-covariant.

Proposition \ref{prop:channel-covariant} shows that the investigation of the preprocessing relation reduces to the study of covariant channels. We will need the following general result in order to characterise the set of $U$-covariant elements in $\CPH$. For a proof, see e.g. \cite{CaHeTo08}. 

\begin{proposition}\label{prop:PhiEst}
Suppose $U$ is a unitary representation of $G$ in a Hilbert space $\hh$, and $\Xi\in\CPH$ is a $U$-covariant channel. There exists a separable Hilbert space $\kk$, a unitary representation $D$ of $G$ in $\kk$, and an isometry $W : \hh \frecc \kk\otimes\hh$ such that 
\begin{eqnarray}
WU(g) &=& (D(g) \otimes U(g)) W  \qquad \forall g\in\group \, , \nonumber \\
\Xi (A) &=& W^\ast (I\otimes A) W \qquad \forall A\in\lh \, . \label{PhiEsteso}
\end{eqnarray}
\end{proposition}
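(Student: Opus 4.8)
The plan is to obtain the desired structure from the minimal Stinespring dilation of $\Xi$, using covariance to manufacture the representation $D$. First I would unwind what covariance of the channel $\Phi$ means for its adjoint $\Xi=\Phi^\ast$: dualizing $\Phi(U(g)\varrho U(g)^\ast)=U(g)\Phi(\varrho)U(g)^\ast$ through the trace pairing $\tr{\Phi(\varrho)A}=\tr{\varrho\Xi(A)}$ gives the relation $\Xi(U(g)AU(g)^\ast)=U(g)\Xi(A)U(g)^\ast$ for all $g\in\group$ and $A\in\lh$. Next I would invoke the Stinespring representation of the normal unital completely positive map $\Xi$: there is a normal $\ast$-representation $\pi$ of $\lh$ on a separable Hilbert space $\mathcal{M}$ and an isometry $V:\hh\frecc\mathcal{M}$ with $\Xi(A)=V^\ast\pi(A)V$, and the dilation can be taken minimal, i.e.\ $\mathcal{M}=\overline{\mathrm{span}}\{\pi(A)V\psi:A\in\lh,\ \psi\in\hh\}$. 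Since every normal $\ast$-representation of $\lh=\mathcal{B}(\hh)$ is unitarily equivalent to an amplification, I may identify $\mathcal{M}\cong\kk\otimes\hh$ and $\pi(A)\cong I\otimes A$ for a suitable multiplicity space $\kk$, writing $W$ for the corresponding isometry $\hh\frecc\kk\otimes\hh$. This already yields the second identity $\Xi(A)=W^\ast(I\otimes A)W$, and separability of $\kk$ follows from that of $\hh$ and $\mathcal{M}$.

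The covariant part comes from uniqueness of the minimal dilation. For each $g\in\group$ I would define a map on the dense subspace $\mathrm{span}\{(I\otimes A)W\psi\}$ by $(I\otimes A)W\psi\mapsto(I\otimes U(g)AU(g)^\ast)WU(g)\psi$. A direct computation of inner products, in which the covariance relation $\Xi(U(g)A^\ast BU(g)^\ast)=U(g)\Xi(A^\ast B)U(g)^\ast$ is used to cancel the outer $U(g)$ factors, shows this map preserves inner products; by minimality it extends to an isometry $\tilde D(g)$ of $\kk\otimes\hh$ whose range contains the total set $\{(I\otimes B)W\phi\}$ and is therefore dense, so $\tilde D(g)$ is unitary. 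Setting $A=I$ in the defining relation gives $\tilde D(g)W=WU(g)$, while comparing the relation on products $(I\otimes A)(I\otimes B)W\psi$ shows the intertwining $\tilde D(g)(I\otimes A)=(I\otimes U(g)AU(g)^\ast)\tilde D(g)$; the same bookkeeping yields $\tilde D(g)\tilde D(h)=\tilde D(gh)$.

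It remains to factor $\tilde D(g)$ through $\kk$. From the intertwining relation, the operator $C(g):=(I\otimes U(g)^\ast)\tilde D(g)$ commutes with every $I\otimes A$, and since the commutant of $\{I\otimes A:A\in\lh\}$ in $\mathcal{L}(\kk\otimes\hh)$ is exactly $\{D\otimes I:D\in\mathcal{L}(\kk)\}$, there is $D(g)\in\mathcal{L}(\kk)$ with $C(g)=D(g)\otimes I$. Hence $\tilde D(g)=D(g)\otimes U(g)$, and $\tilde D(g)W=WU(g)$ becomes the first claimed identity $WU(g)=(D(g)\otimes U(g))W$. Unitarity of $\tilde D(g)$ forces $D(g)$ to be unitary, and $\tilde D(g)\tilde D(h)=\tilde D(gh)$ forces $D(g)D(h)=D(gh)$, so $D$ is a genuine unitary representation once strong continuity is checked.

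The main obstacle I expect is not the algebra but the topological bookkeeping: verifying that the $\tilde D(g)$, equivalently the $D(g)$, depend strongly continuously on $g$, which I would deduce from strong continuity of $U$ together with the explicit action of $\tilde D(g)$ on the total set $\{(I\otimes A)W\psi\}$ and a uniform boundedness argument; and confirming that the splitting $C(g)=D(g)\otimes I$ is continuous enough to make $D$ a strongly continuous homomorphism, as well as handling the degenerate cases for $\kk$. By contrast, the purely algebraic steps — the covariance computation, the isometry estimate, and the commutant identification — are routine once minimality of the Stinespring dilation is in hand.
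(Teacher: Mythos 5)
The paper does not prove Proposition \ref{prop:PhiEst} itself; it defers to the reference \cite{CaHeTo08}. Your argument --- dualizing the covariance to $\Xi$, taking the minimal Stinespring dilation of the normal unital map, identifying the normal representation of $\lh$ with the amplification $I\otimes A$ on $\kk\otimes\hh$, using minimality to build the unitaries $\tilde D(g)$, and invoking the commutant theorem to split $\tilde D(g)=D(g)\otimes U(g)$ --- is the standard covariant Stinespring dilation proof and is essentially the route taken in that reference, with the remaining details you flag (strong continuity of $D$, separability of $\kk$) going through exactly as you indicate.
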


We apply Proposition \ref{prop:PhiEst} in Section \ref{sec:phase-pre} in the study of phase observables.

%%%%%%%%%%%%%%%%%%%%%%%%%
\section{Optimal phase observables}\label{sec:phase}
%%%%%%%%%%%%%%%%%%%%%%%%%

%%%%%%%%%%%%%%%%%%%%%%%%%%%%%%%%%%%%%%
\subsection{Structure of phase observables}\label{sec:phase-structure}
%%%%%%%%%%%%%%%%%%%%%%%%%%%%%%%%%%%%%%

Let  $\hi$ be a complex Hilbert space with an orthonormal basis $\{\ket{n}\,|\,n\in\N\}$, $\N=\{0,1,2,...\}$. We define the lowering, raising, and number operators as
$$
a:=\sum_{n=0}^\infty\sqrt{n+1}\, \kb{n}{n+1},\qquad
a^*:=\sum_{n=0}^\infty\sqrt{n+1}\, \kb{n+1}{n},\qquad 
N:=a^*a=\sum_{n=0}^\infty n\, \kb{n}{n},
$$
respectively. Physically the Hilbert space $\hi$ and the above operators are associated with
a single-mode optical field. The vectors $\ket{n}$ are called \emph{number states}.

Coherent states $\ket{z}:=e^{-|z|^2/2}\sum_{n=0}^\infty \frac{z^n}{\sqrt{n!}}\ket{n}$, where $z\in\C$,
describe laser light which is quasimonochromatic and thus can be approximated as a single-mode system;
here $|z|$ is the energy and $\arg z$ is the phase parameter of $\ket{z}$.

An observable $\E$ describing a phase parameter measurement should have the interval $[0,2\pi)$ as its outcome space. For convenience, we use one dimensional torus $\T=\{t\in\C:\abs{t}=1\}$ as an equivalent description. Let $\ket{z}$ be a coherent state and $t\in\T$. Then $\ket{tz}$ is another coherent state, now having phase parameter $\arg z+\arg t$ (addition modulo $2\pi$). Therefore, we require that $\E$ describing a phase parameter measurement satisfies the condition
\begin{equation}\label{eq:phase-condition-z}
p^{\E}_{\ket{tz}}(X)=p^{\E}_{\ket{z}}(t^{-1} X)
\end{equation}
for all $z\in\C$, $t\in\T$ and $X\in\borT$. 

As shown in \cite{LaPe02}, an observable $\E:\borT\to\mathcal{L(H)}$ satisfies condition \eqref{eq:phase-condition-z} if and only if
\begin{equation}\label{eq:phase-condition}
U(t) \E(X)U(t)^\ast=\E(tX)
\end{equation}
for all $t\in\T$ and $X\in\borT$, where $U$ is the {\em number representation} of $\T$ in $\hh$, i.e.,
$$
U(t) \ket{n} = t^n \ket{n} \quad \forall n\in\N .
$$
Clearly, $U(t) \ket{z} = \ket{tz}$
We take this covariance condition as the definition for phase observables. 

\begin{definition}
An observable $\E:\borT\to\mathcal{L(H)}$ is a \emph{phase observable} if it satisfies the covariance condition \eqref{eq:phase-condition}.
\end{definition}

The following phase theorem characterizing phase observables has been proved in various different methods in \cite{CaDeLaPe02,Holevo83,LaPe99}.

\begin{theorem}[Phase Theorem]\label{PhTh}
An observable $\E:\borT\to\lh$ is a phase observable if and only if 
\begin{equation*}
\langle{m}|\E(X)|{n}\rangle = c_{m,n}\,\int_X t^{m-n} dt
=c_{m,n}\,\int_{\arg X} e^{i(m-n)\theta}\frac{d\theta}{2\pi}
 \qquad \forall m,n\in\N \, ,
\end{equation*}
where the {\em phase matrix} $(c_{m,n})$ is a positive semidefinite complex $\N\times\N$-matrix and $c_{n,n}=1$ for all $n\in\N$.

If $\E_1$, $\E_2$ are phase observables with phase matrices $(c^1_{m,n})$ and $(c^2_{m,n})$, then $\E_1 \sim_U \E_2$ if and only if there exists a sequence $\{\lambda_n \}_{n\in\N} \in \T$ such that $c^1_{m,n} = \lambda_n \overline{\lambda_m} c^2_{m,n}$.
\end{theorem}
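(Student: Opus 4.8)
The plan is to reduce the whole statement to scalar facts about the matrix elements of $\E$ in the number basis, exploiting that the characters $t\mapsto t^n$ of $\T$ are pairwise distinct.

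First I would prove the structural part. For a phase observable $\E$ fix $m,n\in\N$ and set $\E_{m,n}(X):=\langle m|\E(X)\ket{n}$; since $O\le\E(X)\le I$ and the defining series converges in the weak operator topology, each $\E_{m,n}$ is a complex Borel measure of bounded variation on $\T$. Writing the covariance condition \eqref{eq:phase-condition} in these matrix elements and using $U(t)\ket{n}=t^n\ket{n}$ (hence $\langle m|U(t)=t^m\langle m|$ and $U(t)^\ast\ket{n}=t^{-n}\ket{n}$) I obtain $\E_{m,n}(tX)=t^{m-n}\E_{m,n}(X)$ for all $t\in\T$ and $X\in\borT$. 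Testing this against characters, the substitution $s\mapsto t^{-1}s$ gives $t^{-j}\int_\T s^j\,d\E_{m,n}=t^{m-n}\int_\T s^j\,d\E_{m,n}$ for every $t\in\T$, which forces $\int_\T s^j\,d\E_{m,n}=0$ unless $j=n-m$. Since the trigonometric polynomials are dense in $C(\T)$, the Fourier coefficients determine the measure, so $\E_{m,n}$ is a scalar multiple of $t^{m-n}\,dt$; putting $c_{m,n}:=\int_\T s^{n-m}\,d\E_{m,n}(s)$ gives the claimed formula, and $c_{n,n}=\E_{n,n}(\T)=\langle n|\E(\T)\ket{n}=1$.

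It then remains to match positivity of $\E$ with positive semidefiniteness of $(c_{m,n})$. For the necessity, take a finitely supported sequence $(\xi_n)$ and $\psi=\sum_n\xi_n\ket{n}$; the trigonometric polynomial $g(t):=\sum_{m,n}\overline{\xi_m}\xi_n c_{m,n}t^{m-n}$ satisfies $\int_X g(t)\,dt=\langle\psi|\E(X)|\psi\rangle\ge0$ for every Borel $X$, and a continuous function whose integral over every set is nonnegative is pointwise nonnegative, so $g(1)=\sum_{m,n}\overline{\xi_m}\xi_n c_{m,n}\ge0$. For the converse I start from a positive semidefinite $(c_{m,n})$ with unit diagonal and take a Gram (Kolmogorov) factorization $c_{m,n}=\langle\eta_m|\eta_n\rangle$ in an auxiliary Hilbert space. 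Defining $\E(X)$ through the stated matrix elements, covariance and $\E(\T)=I$ are immediate on matrix elements, while the same computation read in reverse, $g(t)=\no{\sum_n\xi_n t^{-n}\eta_n}^2\ge0$, yields $\langle\psi|\E(X)|\psi\rangle=\int_X g(t)\,dt\ge0$; combined with $\E(\T)=I$ this gives $O\le\E(X)\le I$ on the dense set of finite vectors, so $\E$ is a genuine phase observable.

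For the equivalence criterion I would first describe the unitaries commuting with $U$. If $WU(t)=U(t)W$ for all $t\in\T$, then comparing $\langle m|WU(t)\ket{n}$ with $\langle m|U(t)W\ket{n}$ gives $t^n\langle m|W\ket{n}=t^m\langle m|W\ket{n}$ for all $t$, so $W$ is diagonal, $W\ket{n}=\lambda_n\ket{n}$ with $\lambda_n\in\T$ by unitarity; this is just Schur's lemma for the pairwise inequivalent characters $t^n$. Then $\E_2(X)=W\E_1(X)W^\ast$ becomes, in matrix elements, $c^2_{m,n}=\lambda_m\overline{\lambda_n}c^1_{m,n}$, i.e. $c^1_{m,n}=\lambda_n\overline{\lambda_m}c^2_{m,n}$; conversely, given such $\{\lambda_n\}_{n\in\N}\in\T$, the diagonal unitary $W=\mathrm{diag}(\lambda_n)$ commutes with $U$ and realizes $\E_2=W\E_1W^\ast$, whence $\E_1\sim_U\E_2$. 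The step I expect to be the real obstacle is the sufficiency half of the structural part: since only the number-basis matrix elements are prescribed and the naive density $\sum_{m,n}c_{m,n}\kb{m}{n}$ need not be a bounded operator, one must produce an honest positive operator measure, and it is precisely the Gram factorization together with the pointwise identity $g(t)=\no{\sum_n\xi_n t^{-n}\eta_n}^2$ that secures positivity (and hence boundedness via $\E(\T)=I$) without assuming boundedness in advance.
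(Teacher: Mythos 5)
The paper itself does not prove this theorem: it is quoted as known, with the proof deferred to \cite{CaDeLaPe02}, \cite{Holevo83}, \cite{LaPe99}, so there is no in-text argument to compare yours against. Your proposal is nevertheless a correct, essentially self-contained proof along the standard lines of those references. The reduction to the matrix-element measures $\E_{m,n}=\langle m|\E(\cdot)|n\rangle$, the Fourier argument showing that covariance kills every coefficient except the $(n-m)$-th (whence $\E_{m,n}=c_{m,n}\,t^{m-n}dt$), the identification of positivity of $\E$ with pointwise nonnegativity of the trigonometric polynomials $g(t)=\sum_{m,n}\overline{\xi_m}\xi_n c_{m,n}t^{m-n}$ for finitely supported $(\xi_n)$, and the Kolmogorov--Gram factorization $c_{m,n}=\ip{\eta_m}{\eta_n}$ giving $g(t)=\no{\sum_n\xi_n t^{-n}\eta_n}^2\ge 0$ in the converse direction are all sound; you are also right that the genuine issue in the sufficiency half is that the operator density $\sum_{m,n}c_{m,n}t^{m-n}\kb{m}{n}$ need not be bounded, and your route (positivity of the quadratic forms plus $\E(\T)=I$ on finite vectors, hence $O\le\E(X)\le I$) circumvents it correctly. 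The one point you leave implicit is the $\sigma$-additivity of the reconstructed $\E$ in the weak operator topology: it holds because for each $\psi$ the set function $\langle\psi|\E(\cdot)\psi\rangle$ is a uniform limit (using $\no{\E(X)}\le 1$) of the countably additive set functions attached to finite vectors, but this deserves a sentence. Finally, the $U$-equivalence criterion via Schur's lemma for the pairwise inequivalent characters $t\mapsto t^n$ (so that any unitary commuting with $U$ is diagonal with unimodular entries) is exactly right and gives the stated relation $c^1_{m,n}=\lambda_n\overline{\lambda_m}c^2_{m,n}$.
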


Using Theorem \ref{PhTh} we can easily see some general properties of phase observables. For instance, the probability measure $p^{\E}_{\ket{n}}$ of a phase observable $\E$ in a number state $\ket{n}$ is uniformly distributed, which is expected as number states do not have specific phase.  
Also, a phase observable is never sharp (a projection valued measure); see e.g.\ \cite{LaPe99}.

We notice that $U$-equivalence for two phase observables is, in general, stronger requirement than equivalence. This is demonstrated in the following proposition.

\begin{proposition}\label{prop:not-U-equivalent}
There exist phase observables $\E_1$ and $\E_2$ such that $\E_1$ is equivalent to $\E_2$, but $\E_1$ is not $U$-equivalent to $\E_2$.
\end{proposition}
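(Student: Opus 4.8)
The plan is to work entirely at the level of phase matrices, using the Phase Theorem. Every phase observable is encoded by its matrix $(c_{mn})$, and by the second half of that theorem $\E_1\sim_U\E_2$ holds exactly when there is a single sequence $\{\lambda_n\}\in\T$ with $c^1_{mn}=\lambda_n\overline{\lambda_m}c^2_{mn}$. Thus failure of $U$-equivalence is an easily checkable condition, and the cleanest obstruction is a closed ``loop'' quantity such as $c_{mn}c_{nk}c_{km}$: the diagonal factors $\lambda_n\overline{\lambda_m}$ cancel around the cycle, so this product is a $U$-equivalence invariant. Consequently, if I can produce two phase observables whose matrices have equal moduli $|c^1_{mn}|=|c^2_{mn}|$ but different values of some such loop product, then automatically $\E_1\not\sim_U\E_2$. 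The entire content of the proposition is therefore to make such a pair nonetheless plainly unitarily equivalent.

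For the equivalence $\E_1\sim\E_2$ I would look for a \emph{single} unitary $W$ with $\E_2(X)=W\E_1(X)W^\ast$ that does \emph{not} commute with $U$. The first step is to record when such a conjugate is again covariant: writing out $U(t)\E_2(X)U(t)^\ast=\E_2(tX)$ and inserting the covariance of $\E_1$ shows that $\E_2$ is $U$-covariant if and only if $Z(t):=U(t)^\ast W^\ast U(t)W$ commutes with every $\E_1(X)$. Since $W$ commuting with $U$ forces $Z(t)=I$ and yields only $U$-equivalent observables, the construction must start from an $\E_1$ whose commutant $\{\E_1(X):X\in\borT\}'$ strictly contains $\C I$, i.e.\ from a \emph{reducible} phase observable. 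I would therefore take $\E_1$ with a built-in multiplicity (for instance a phase matrix that is constant along residue classes of the number index, so that $\E_1$ has a non-scalar commutant), and then choose $W$ so that $Z(t)$ falls into this commutant while $W$ itself genuinely mixes distinct number states. The matrix $(c^2_{mn})$ of the resulting $\E_2$ is then read off directly from $W$, and one checks that it has the same moduli but a different loop product from $(c^1_{mn})$.

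The main obstacle — and the point where all the care lies — is to meet the two requirements at once: $W$ must be nontrivial enough that $(c^2_{mn})$ is genuinely \emph{not} a diagonal rescaling of $(c^1_{mn})$, yet constrained enough that $Z(t)$ remains in the commutant so that $\E_2$ stays a phase observable. The subtlety is that the naive reducible choices collapse back to $U$-equivalence: the extra unitary freedom can be reabsorbed into the commutant, so that the conjugating $W$ may after all be replaced by a $U$-commuting one. In the language developed above, $\E_1\sim\E_2$ is witnessed by the cocycle $Z(t)\in\{\E_1(X)\}'$, whereas $\E_1\sim_U\E_2$ requires this cocycle to be a coboundary $\beta_t(g)g^{-1}$ (with $\beta_t=\mathrm{Ad}\,U(t)^\ast$) implemented by a \emph{commutant} unitary $g$; equivalently, $\E_1$ must admit two covariance-implementing representations that are unitarily conjugate but not conjugate through $\{\E_1(X)\}'$. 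I would thus design the multiplicity structure of $\E_1$ precisely so that this cocycle is not a commutant coboundary, and finally confirm non-$U$-equivalence concretely by exhibiting indices $m,n,k$ for which $c^1_{mn}c^1_{nk}c^1_{km}\neq c^2_{mn}c^2_{nk}c^2_{km}$.
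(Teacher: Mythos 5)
Your framework is sound as far as it goes: the loop products $c_{mn}c_{nk}c_{km}$ and the moduli $|c_{mn}|$ are indeed invariants of $U$-equivalence by the Phase Theorem, and your cocycle analysis correctly identifies that a conjugating unitary $W$ can fail to commute with $U$ only if $Z(t)=U(t)^\ast W^\ast U(t)W$ lands in the (necessarily non-scalar) commutant of $\E_1$, so that $\E_1$ must be reducible. The genuine gap is that the proposal stops at the level of a programme. The proposition is an existence statement, and you never exhibit the pair $\E_1,\E_2$, the unitary $W$, or the indices witnessing failure of $U$-equivalence; phrases such as ``I would therefore take $\E_1$ with a built-in multiplicity'' and ``I would thus design the multiplicity structure so that this cocycle is not a commutant coboundary'' describe what a proof must accomplish without accomplishing it, and it is exactly there that all the content lies.

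The missing concrete idea is also simpler than your plan suggests. By the Phase Theorem, $\langle m|\E(X)|n\rangle = c_{m,n}\int_X t^{m-n}\,dt$, so the integral kernel depends only on the difference $m-n$. The paper therefore takes a phase matrix whose only nonzero off-diagonal entries sit at $(0,1)$ and $(1,0)$ (with $c_{0,1}=z$, $0<|z|<1$) and a second one with that entry moved to $(2,3)$ and $(3,2)$; the permutation unitary swapping $\ket{0}\leftrightarrow\ket{2}$ and $\ket{1}\leftrightarrow\ket{3}$ conjugates one observable into the other precisely because $0-1=2-3$. Non-$U$-equivalence is then immediate from the crudest invariant, the modulus of a single entry: $|c^1_{0,1}|=|z|\neq 0=|c^2_{0,1}|$. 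Your insistence on matching all moduli and then distinguishing by a cycle product sets up a strictly harder construction than the problem requires, which is one reason the proposal does not close; if you want to salvage your route, you should at minimum write down one explicit reducible phase matrix and one explicit $W$ and verify both the covariance of $W\E_1 W^\ast$ and the invariant computation.
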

 
\begin{proof}
Fix $z\in\C$, $0<|z|<1$, and define phase observables $\E_1$ and $\E_2$ by setting
\begin{eqnarray*}
\E_1(X) &=& \int_X dt\,I+z\int_Xt^{-1}dt\, \kb01 + \overline z\int_Xt\, dt\, \kb10 \, , \\
\E_2(X) &=& \int_X dt\,I+z\int_Xt^{-1}dt\, \kb23 + \overline z\int_Xt\, dt\, \kb32
\end{eqnarray*}
for all $X\in\borT$. Define a unitary operator $W$ as
\begin{equation*}
W=\kb 20 + \kb 02 + \kb 13 + \kb 31 + \sum_{n=4}^\infty\kb n n \, .
\end{equation*}
Then $\E_2=W\E_1W^*$. From Theorem \ref{PhTh} one sees easily that $\E_1$ and $\E_2$ cannot be $U$-equivalent.
\end{proof}

%%%%%%%%%%%%%%%%%%%%%%%%%%%%%%%%%
\subsection{Canonical phase observable}\label{sec:canonical}
%%%%%%%%%%%%%%%%%%%%%%%%%%%%%%%%%

The \emph{canonical phase observable} $\Ecan$ is the phase observable determined by the phase matrix $c_{m,n}=1$ for all $m,\,n\in\N$. If $\E$ is another phase observable with a phase matrix $(c_{m,n})$, then $\E\sim_U \Ecan$ if and only if $|c_{m,n}| = 1$ for all $m,\,n\in\N$; see \cite{LaPe00}.

The canonical phase observable $\Ecan$ has some properties which make it special among all phase observables. For example, it is the only phase observable (up to equivalence)  which generates
the number shifts \cite{LaPe00}. A drawback of $\Ecan$ is that a realistic measurement scheme for it is not known.

We will see that $\Ecan$ is optimal phase observable in all the four different ways we listed in Section \ref{sec:intro}. However, (perhaps surprisingly) it is not the unique phase observable having this feature. 

%%%%%%%%%%%%%%%%%%%%%%%%%%%%%%%%%%%%%%%%%
\subsection{State generated phase observables}\label{sec:state-generated}
%%%%%%%%%%%%%%%%%%%%%%%%%%%%%%%%%%%%%%%%%

Let  $D(z):=e^{za^*-\overline{z}a}$,  $z\in\mathbb C$, be the shift operator, and denote by $\shd$ the convex set of all {\it diagonal states}, that is, the states of the form
$T=\sum_{n=0}^\infty\lambda_n|n\rangle\langle n|$, where $\lambda_n\ge 0$ for all $n\in\N$ and $\sum_{n=0}^\infty\lambda_n=1$. Obviously, the extremal elements of $\shd$ are the one-dimensional projections of the form $\kb{n}{n}$, $n\in\N$.

The \emph{phase observable generated by $T\in\shd$}
(or a \emph{phase space phase observable}), denoted by $\E_T$, is defined by
\begin{equation}\label{eq:ET}
\E_T(X)=\frac{1}{\pi}\int_{\arg X}\int_0^\infty D\big(re^{i\theta}\big)T D\big(re^{i\theta}\big)^*r\,dr\,d\theta
\hspace{1cm}\text{(weakly)}
\end{equation}
for all $X\in{\mathcal B}(\mathbb T)$ \cite{LaPe99}.
Using the decompostion of $T$, one gets
$$
\E_T(X)=\sum_{n=0}^\infty\lambda_n \E_{|n\rangle}(X) 
\hspace{1cm}\text{(weakly) for all $X\in{\mathcal B}(\mathbb T)$,}
$$
where we have denoted $\E_{\kb{n}{n}}$ briefly by $\E_{\ket{n}}$. Note that the operator $T$ in formula \eqref{eq:ET} must be diagonal in the number basis so that the corresponding observable would be phase shift covariant \cite{LaPe99}.

State generated phase observables are important since some of them have been measured. Indeed, $\E_{\ket{0}}$ can be seen as an angle margin observable of a phase space observable associated to a $Q$-function, and this has been measured by Walker and Carrol for a coherent state input field \cite{WaCa86}. Moreover, in principle all state generated phase observables can be measured using an eight-port homodyne detector \cite{KiLa08}.

We will see that state generated phase observables are approximately sharp and postprocessing clean. They are not extremal nor preprocessing clean. 

%%%%%%%%%%%%%%%%%%%%%%%%%%%%%%%%%%%%%%
\subsection{Approximately sharp phase observables}\label{sec:phase-sharp}
%%%%%%%%%%%%%%%%%%%%%%%%%%%%%%%%%%%%%%

Let $C(\mathbb T)$ be a Banach space of continuous complex functions on $\mathbb T$
equipped with the sup norm. By the Riesz representation theorem, the topological dual $M(\mathbb T)$ of $C(\mathbb T)$ consists of regular complex Borel measures on $\mathbb T$ (or Baire measures).
We equip $M(\mathbb T)$ with the weak-star topology. In this topology, a net $\{p_i\}_{i\in\mathcal I}\subset M(\mathbb T)$ converges to a point $p\in M(\mathbb T)$
if $\lim_{i\in\mathcal I}\int f dp_i=\int f dp$ for all $f\in C(\mathbb T)$;
we denote $p=\text{w*-lim}_{i\in\mathcal I}p_i$.

Let $\delta_t:\,C(\mathbb T)\to\C,\; f\mapsto \delta_t(f)=f(t)$ be the Dirac distribution (or Dirac measure) concentrated on $t\in\mathbb T$. Let us write the parameter $z$ of a coherent state $\ket z$ in the form $z=rt$, where
$r=|z|\in\R$ and $t=z/|z|\in\T$. Let us then recall the following result, proved in \cite{LaPe00}.

\begin{proposition}\label{prop:9}
Let $\E$ be a phase observable with a phase matrix $(c_{m,n})$, and let $u\in\T$. Then
$$
\mathop{\text{\rm w*-lim}}_{r\to\infty}p_{|rt\rangle}^\E=\delta_{tu}
$$
if and only if
$$
\lim_{m\to\infty}c_{m,m+k}=u^k
$$
for all $k=1,2,3,...$. 
\end{proposition}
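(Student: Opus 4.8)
The plan is to convert the weak* statement about the family $p^\E_{\ket{rt}}$ into the asymptotics of its Fourier coefficients, and then to read off the behaviour of the numbers $c_{m,m+k}$. Because $\T$ is compact and each $p^\E_{\ket{rt}}$ is a probability measure, weak* convergence to $\delta_{tu}$ is equivalent to $\int_\T s^k\,dp^\E_{\ket{rt}}(s)\to(tu)^k$ for all $k\in\Z$; since the measure is real the coefficient at $-k$ is the complex conjugate of the one at $k$, and $k=0$ is the normalisation, so only $k\ge 1$ matters. Expanding $\ket{rt}$ in the number basis and using the Phase Theorem, the integration $\int_\T s^{m-n+k}\,ds=\delta_{m-n+k,0}$ kills every term except $n=m+k$, giving
\begin{equation*}
\int_\T s^k\,dp^\E_{\ket{rt}}(s)=t^k\,e^{-r^2}\sum_{m=0}^\infty\frac{r^{2m+k}}{\sqrt{m!\,(m+k)!}}\,c_{m,m+k}\, .
\end{equation*}
With $w^{(k)}_m(r):=e^{-r^2}r^{2m+k}/\sqrt{m!\,(m+k)!}\ge 0$, the proposition reduces, for each fixed $k\ge 1$, to the equivalence of $\sum_m w^{(k)}_m(r)\,c_{m,m+k}\to u^k$ as $r\to\infty$ with $c_{m,m+k}\to u^k$ as $m\to\infty$ --- an Abelian--Tauberian problem for the summability kernel $w^{(k)}_m$.

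The key lemma is that $\{w^{(k)}_m(r)\}_m$ is an approximate identity: $\sum_m w^{(k)}_m(r)\to 1$ and $\sum_{m\le M}w^{(k)}_m(r)\to 0$ for each fixed $M$. The second is immediate since $w^{(k)}_m(r)\to 0$ for fixed $m$. For the first I would factor $w^{(k)}_m(r)=r^k\big[(m+1)\cdots(m+k)\big]^{-1/2}$ times the $\mathrm{Poisson}(r^2)$ weight $e^{-r^2}r^{2m}/m!$, so that $\sum_m w^{(k)}_m(r)=\expec\big[r^k((X+1)\cdots(X+k))^{-1/2}\big]$ for a $\mathrm{Poisson}(r^2)$ variable $X$; as $X/r^2\to 1$ in probability the integrand tends to $1$, and the limit is pinned to $1$ by a Jensen lower bound together with an upper bound on the negative moment $\expec[(X+1)^{-k/2}]$ (via the integral representation $(X+1)^{-s}=\Gamma(s)^{-1}\int_0^\infty u^{s-1}e^{-u(X+1)}\,du$). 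Granting the lemma, the implication $c_{m,m+k}\to u^k\Rightarrow$ summability to $u^k$ is the routine Abelian estimate: split the series at a large $M$, control the low part by $\sum_{m\le M}w^{(k)}_m(r)\to 0$ together with $|c_{m,m+k}|\le 1$ (from the $2\times 2$ minors of the positive semidefinite phase matrix), and the high part by $\sum_m w^{(k)}_m(r)\to 1$ together with $|c_{m,m+k}-u^k|<\eps$.

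I expect the converse (Tauberian) direction to carry essentially all the difficulty, because summability of a merely bounded sequence need not entail convergence --- the bounded sequence $((-1)^m)_m$ is summed to $0$ by this kernel --- so the positive semidefiniteness of $(c_{m,n})$ must enter decisively. The natural route is to write $c_{m,n}=\ip{h_m}{h_n}$ for unit vectors $h_m$ and set $v_m:=\overline{u}^{\,m}h_m$, whence $c_{m,m+k}=u^k\ip{v_m}{v_{m+k}}$ and
\begin{equation*}
2-2\,\mathrm{Re}\big(\overline{u}^{\,k}c_{m,m+k}\big)=\no{v_{m+k}-v_m}^2\ge 0 .
\end{equation*}
The hypothesis then states that the weighted averages $\sum_m w^{(k)}_m(r)\no{v_{m+k}-v_m}^2$ of these nonnegative quantities tend to $0$, i.e.\ that $\no{v_{m+k}-v_m}$ vanishes in weighted density. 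The crux is to upgrade this to the genuine limit $\no{v_{m+k}-v_m}\to 0$ for every $k$ (equivalently, that $(v_m)$ is Cauchy); the line I would pursue is to exploit the correlations between neighbouring matrix entries encoded in the $3\times 3$ and $4\times 4$ minors of $(c_{m,n})$, in the hope of bounding the oscillation of $m\mapsto c_{m,m+k}$ and thereby converting the density statement into pointwise convergence. Making this last conversion rigorous is where I expect the real work to lie.
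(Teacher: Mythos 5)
First, a point of order: the paper does not prove this statement --- Proposition \ref{prop:9} is quoted from \cite{LaPe00} --- so there is no in-paper argument to compare yours with, and I assess the proposal on its own terms. Your reduction is the right one and surely matches the source: trigonometric polynomials are dense in $C(\T)$ and all measures involved are probability measures, so the weak* statement is equivalent to $\int_\T s^k\,dp^\E_{|rt\rangle}(s)\to (tu)^k$ for every $k\geq 1$, and your moment formula is correct. Note that your weight is exactly $w^{(k)}_m(r)=\sqrt{P_m(r^2)\,P_{m+k}(r^2)}$, where $P_m(\lambda)=e^{-\lambda}\lambda^m/m!$ are the Poisson weights; this gives $\sum_m w^{(k)}_m(r)\leq 1$ immediately by Cauchy--Schwarz, and the lower bound $\sum_m w^{(k)}_m(r)\to 1$ follows from concentration of the Poisson distribution around its mean $r^2$ (your negative-moment route also works). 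With that lemma the Abelian direction is complete, and it is the only direction the paper ever uses: Proposition \ref{prop:phase-u}, and through it all of Section \ref{sec:phase-sharp}, needs only the implication from $\lim_m c_{m,m+k}=u^k$ to weak* convergence.

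The converse direction is a genuine gap, as you say yourself, and the route you sketch cannot close it. The hypothesis tells you that the nonnegative quantities $\no{v_{m+k}-v_m}^2$ have vanishing average against the kernel $w^{(k)}_m(r)$, which is spread over a window of width of order $r$ around $m\approx r^2$ and puts mass only $O(1/r)$ on any single index; positive semidefiniteness of $(c_{m,n})$ does not upgrade this to pointwise convergence. Concretely, take $u=1$, let $e_0\perp e_1$ be unit vectors, set $h_m=e_1$ if $m$ is a power of $2$ and $h_m=e_0$ otherwise, and let $c_{m,n}=\ip{h_m}{h_n}\in\{0,1\}$; this is a legitimate phase matrix. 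For each fixed $k\geq 1$ one has $c_{2^j,2^j+k}=0$ for all large $j$, so $c_{m,m+k}\not\to 1$; yet the exceptional set $\{m:\ c_{m,m+k}=0\}$, being contained in $\{2^j\}\cup\{2^j-k\}$, meets the bulk $[r^2/2,\,2r^2]$ of the Poisson distribution in a bounded number of points, so its total $w^{(k)}(r)$-mass is $O(1/r)$ and $\sum_m w^{(k)}_m(r)\,c_{m,m+k}\to 1$ for every $k$. Hence no argument based only on the moment identity and positivity of the phase matrix (minors included) can yield the ``only if'' implication; indeed this example appears to contradict that implication as literally stated, so before investing further effort you should re-examine the precise formulation and proof of the converse in \cite{LaPe00}. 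For the purposes of the present paper the issue is harmless, since only the Abelian half is ever invoked.
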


This leads to the following useful conclusion.

\begin{proposition}\label{prop:phase-u}
Let $\E$ be a phase observable with a phase matrix $(c_{m,n})$. If there exists $u\in\T$ such that
\begin{equation}\label{eq:lim}
\lim_{m\to\infty}c_{m,m+k}=u^k
\end{equation}
for all $k=1,2,3,...$, then $\E$ is approximately sharp.
\end{proposition}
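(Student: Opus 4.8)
The plan is to verify the defining condition of approximate sharpness directly, namely that $\no{\E(X)}=1$ for every non-empty open set $X\subseteq\T$. Recall from Section \ref{sec:sharp} that for a covariant observable one has $\E(X)=\nil$ exactly when $X$ has Haar measure zero, and that every non-empty open subset of $\T$ has positive Haar measure; hence the sets $X$ with $\E(X)\neq\nil$ are precisely the non-empty open ones, and it suffices to treat these. Since $\nil\le\E(X)\le\id$, the bound $\no{\E(X)}\le 1$ is automatic, so the whole content is the reverse inequality $\no{\E(X)}\ge 1$. Because coherent states $\ket{z}$ are unit vectors with $p^\E_{\ket z}(X)=\ip{z}{\E(X)z}$, it is enough to exhibit, for each such $X$, a family of coherent states along which $p^\E_{\ket z}(X)\to 1$.

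This is where Proposition \ref{prop:9} enters. By that proposition, the hypothesis \eqref{eq:lim} is equivalent to the statement that $p^\E_{\ket{rt}}\to\delta_{tu}$ in the weak-star topology as $r\to\infty$, for every $t\in\T$. Given a non-empty open set $X\subseteq\T$, I would fix a point $s\in X$ and set $t:=s\,\overline u\in\T$, so that $tu=s$; then $p^\E_{\ket{rt}}\to\delta_{s}$ weak-star, with the limiting Dirac mass lying inside $X$.

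It remains to convert this weak-star convergence into a lower bound on the measure of the open set $X$. The clean way is Urysohn's lemma: since $\T$ is a compact metric space, pick $f\in C(\T)$ with $0\le f\le 1$, $f(s)=1$, and $\supp f\subseteq X$. Then $p^\E_{\ket{rt}}(X)\ge\int_\T f\,dp^\E_{\ket{rt}}\to\int_\T f\,d\delta_s=f(s)=1$; as each $p^\E_{\ket{rt}}(X)\le 1$, this forces $\lim_{r\to\infty}p^\E_{\ket{rt}}(X)=1$, and hence $\no{\E(X)}\ge 1$. (Equivalently, one may invoke the Portmanteau theorem, whose open-set half gives directly $\liminf_{r\to\infty} p^\E_{\ket{rt}}(X)\ge\delta_s(X)=1$.) Combining with $\no{\E(X)}\le 1$ yields $\no{\E(X)}=1$, which is the desired conclusion.

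The one genuinely non-routine step is the last one: weak-star convergence controls only integrals against continuous functions, not measures of open sets directly, so the lower semicontinuity built into the Urysohn (or Portmanteau) argument is exactly what makes an \emph{open} set $X$, rather than a closed one, the right object to test. Everything else -- the reduction to non-empty open sets, the choice $t=s\,\overline u$ that places the Dirac mass in $X$, and the automatic bound $\no{\E(X)}\le 1$ -- is routine bookkeeping.
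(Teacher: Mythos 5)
Your proposal is correct and follows essentially the same route as the paper: invoke Proposition \ref{prop:9} to get $\mathop{\text{w*-lim}}_{r\to\infty}p^{\E}_{\ket{rt}}=\delta_{tu}$, choose $t$ so that $tu$ lies in the given non-empty open set $X$, and pass from weak-star convergence to $\lim_{r\to\infty}p^{\E}_{\ket{rt}}(X)=1$ via a continuous $0\le f\le 1$ with $f(tu)=1$ and $\supp f\subseteq X$. The only cosmetic difference is that you bound $p^{\E}_{\ket{rt}}(X)\ge\int f\,dp^{\E}_{\ket{rt}}$ directly (or cite Portmanteau), while the paper runs the equivalent estimate through the complement $\T\setminus X$.
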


\begin{proof}
If condition \eqref{eq:lim} holds, then 
$\mathop{\text{\rm w*-lim}}_{r\to\infty}p_{|rt\rangle}^\E=\delta_{tu}$ by the previous proposition.
Let $X\in \borT$ be a nonempty open set and
choose $t$ such that $tu\in X$.
Let $f\in C(\mathbb T)$ be such that $f(tu)=1$, the support of $f$ is contained in $X$ and $0\le f\le 1$.
For any $\epsilon>0$ there exists $r_\epsilon\ge 0$
such that $\int f dp_{|rt\rangle}^\E>1-\epsilon$ for all $r\ge r_\epsilon$.
Since, for all $r\ge r_\epsilon$, 
$$
1=p_{|rt\rangle}^\E(X)+p_{|rt\rangle}^\E(\mathbb T\setminus X)\ge
\int f dp_{|rt\rangle}^\E
+p_{|rt\rangle}^\E(\mathbb T\setminus X)
>1-\epsilon+p_{|rt\rangle}^\E(\mathbb T\setminus X)
$$
it follows that $p_{|rt\rangle}^\E(\mathbb T\setminus X)<\epsilon$ for all $r\ge r_\epsilon$.
Thus, $\lim_{r\to\infty}p_{|rt\rangle}^\E(\mathbb T\setminus X)=0$ and, hence,
$\lim_{r\to\infty}p_{|rt\rangle}^\E(X)=1$. This implies that $\|\E(X)\|=1$.
\end{proof}

Note that (in the context of the above proof) 
$\lim_{r\to\infty}p_{|rt\rangle}^\E(X)=\delta_{tu}(X)$ {\it does not} hold for all open $X$.
Take, e.g., $X=\mathbb T\setminus\{tu\}$ and note that then $p_{|rt\rangle}^\E(X)=1$ since
$p_{|rt\rangle}^\E$ is absolutely continuous measure with respect to the Haar measure.

Since the canonical phase observable $\Ecan$ corresponds to the phase matrix $c_{m,n}=1$, it follows from Proposition \ref{prop:phase-u} that $\Ecan$ is approximately sharp (see also \cite{HeLaPePuYl03} for a different proof). 
\begin{proposition}\label{prop:11}
All state generated phase  observables $\E_T$, $T\in\shd$, are approximately sharp. 
\end{proposition}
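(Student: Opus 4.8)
The plan is to reduce the claim to an application of Proposition \ref{prop:phase-u} by computing the phase matrix of the state generated observables $\E_{\ket{n}}$ and verifying the limit condition \eqref{eq:lim} with $u=1$. First I would recall the structure of $\E_{\ket{n}}$: since each $\E_T$ is a convex combination $\E_T = \sum_n \lambda_n \E_{\ket{n}}$, the phase matrix of $\E_T$ is the corresponding convex combination of the phase matrices of the $\E_{\ket{n}}$. Because approximate sharpness is preserved under the open-ball-norm condition and, more directly, because Proposition \ref{prop:phase-u} only needs the diagonal-shift limit of the matrix entries, it suffices to understand the off-diagonal entries $c^{(n)}_{m,m+k}$ of each $\E_{\ket{n}}$.

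The key computation is to extract the phase matrix entries of $\E_{\ket{n}}$ from the defining integral \eqref{eq:ET}, i.e.\ to evaluate $\langle m | D(re^{i\theta}) \kb{n}{n} D(re^{i\theta})^* | m' \rangle$ and integrate against $\tfrac{1}{\pi} r\,dr\,d\theta$ over $\arg X$ and $r\in[0,\infty)$. The matrix elements of the displaced number state $D(z)\ket{n}$ are classical: they are expressible through associated Laguerre polynomials, and the radial integral of products of such matrix elements produces explicit entries. Concretely, writing $c^{(n)}_{m,m+k}$ for the resulting phase matrix entries, one finds a closed form involving $\Gamma$-function ratios and Laguerre values; the essential fact I need is their behavior as $m\to\infty$ with $k$ fixed. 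I would then show $\lim_{m\to\infty} c^{(n)}_{m,m+k} = 1 = 1^k$ for every fixed $n$ and every $k\ge 1$, which is the hypothesis of Proposition \ref{prop:phase-u} with $u=1$.

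To pass from the individual $\E_{\ket{n}}$ to a general $\E_T = \sum_n \lambda_n \E_{\ket{n}}$, I would invoke Proposition \ref{prop:phase-u} in one of two ways. The cleaner route is to note that the phase matrix of $\E_T$ has entries $c_{m,m+k} = \sum_n \lambda_n c^{(n)}_{m,m+k}$; since each summand tends to $1$ and the weights $\lambda_n$ form a probability distribution, a dominated-convergence argument (the entries are uniformly bounded by $1$ in modulus, as diagonal-normalized positive semidefinite matrices satisfy $|c_{m,m+k}|\le 1$) gives $\lim_{m\to\infty} c_{m,m+k} = 1$ for every fixed $k$. Applying Proposition \ref{prop:phase-u} with $u=1$ then yields that $\E_T$ is approximately sharp.

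The main obstacle will be the explicit radial integration in the second step: establishing the exact form of $c^{(n)}_{m,m+k}$ and, above all, controlling its large-$m$ asymptotics well enough to conclude convergence to $1$. This is where the Laguerre-polynomial estimates enter, and it is the only genuinely computational part of the argument; everything else is either bookkeeping or a direct appeal to the already-proved Propositions \ref{prop:phase-u} and (implicitly) \ref{prop:9}. A possible simplification, which I would check first, is whether the $n=0$ case (the measured observable $\E_{\ket{0}}$, whose phase matrix is the classically known $c^{(0)}_{m,m+k} = \Gamma(m+1+\tfrac{k}{2})/\sqrt{\Gamma(m+1)\Gamma(m+1+k)}\to 1$) can be bootstrapped to general $n$ by a recursion in $n$ or by a direct uniform estimate, thereby avoiding a separate asymptotic analysis for each $n$.
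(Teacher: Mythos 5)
Your proposal follows essentially the same route as the paper: decompose $\E_T=\sum_s\lambda_s\E_{\ket{s}}$, establish $\lim_{m\to\infty}c^{\ket{s}}_{m,m+k}=1$ for each $s$ and $k$, pass to the convex combination via the uniform bound $|c^{\ket{s}}_{m,m+k}|\le 1$, and apply Proposition \ref{prop:phase-u} with $u=1$. The only difference is that the paper does not carry out the Laguerre-polynomial asymptotics you flag as the main obstacle --- it simply cites this limit from \cite{LaPe00} --- so your plan is correct modulo that known result.
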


\begin{proof}
Let $\E_T$ be the phase observable generated by $T=\sum_{s\in\N}\lambda_s \kb{s}{s}\in\shd$.
Since
$
\E_T=\sum_{s=0}^\infty\lambda_s \Es
$ (weakly),
the phase matrix elements of $\E_T$ are $c^T_{m,n}=\sum_{s=0}^\infty\lambda_s c_{m,n}^{\ket s}$.
It has been shown in \cite{LaPe00} that $\lim_{m\to\infty}c^{\ket s}_{m,m+k}=1$ for all $k\in\N$ and $s\in\N$. Since $| c^{\ket s}_{m,m+k} | \leq 1$, we have $\lim_{m\to\infty} c^T_{m,m+k} = 1$ for all $k\in\N$.
The claim follows then from Prop. \ref{prop:phase-u} with $u=1$.
\end{proof}

Finally, let us notice that, as seen in the proof of Proposition \ref{prop:phase-u},
$$
\lim_{r\to\infty}\langle rt|\E_{T}(X)|rt\rangle=1
$$
where $X\subseteq\mathbb T$ is open set and $t\in X$.
This means that in the classical limit when the energy of the state $|z\rangle$ is large, the phase parameter of $|z\rangle$ can be determined accurately by measuring $\E_T$.
However, there is no phase observable $\E$ for which $\langle rt|\E(X)|rt\rangle\approx1$ for finite energies $r=|z|$. There exist (phase shift covariant) generalized operator measures satisfying this condition; these measures describe measurements where only a restricted class of state preparations are available \cite{Pellonpaa01}.

%%%%%%%%%%%%%%%%%%%%%%%%%%%%%%%%%%%%%%%%%%%%%
\subsection{Extremal phase observables}\label{sec:phase-convex}
%%%%%%%%%%%%%%%%%%%%%%%%%%%%%%%%%%%%%%%%%%%%%

Let $\CC$ be the convex set of all phase matrices and let $\ext\CC$ be the set of the extremal elements. The correspondence between the convex sets $\CC$ and $\obs$ preserves the convex structures. Especially, a phase observable $\E$ determined by a phase matrix $(c_{n,m})$ is extremal in $\obs$ if and only if $(c_{n,m})$ is extremal in $\CC$.

Let $(c_{n,m})$ be a phase matrix. It is always possible to choose a Hilbert space $\hik$ and a sequence of unit vectors $\{ \eta_n \}_{n\in\N}$ which is total in $\hik$ such that $c_{n,m}=\ip{\eta_n}{\eta_m}$ for all $n,m\in\N$; see \cite{CaDeLaPe02}. The dimension of $\hik$ is uniquely determined by $(c_{n,m})$ and we call $\dim\hik$ the \emph{rank} of  $(c_{n,m})$. Let us denote by $\mathcal T(\hik)$ the Banach space of trace-class operators on $\hik$. As proved in \cite{CaHePeTo08}, \cite{KiPe08}, the phase matrix $(c_{n,m})$ is extremal if and only if the trace-class closure of ${\rm lin}_{\C}\left\{\kb{\eta_n}{\eta_n} : n\in\N \right\}\subseteq\mathcal T(\hik)$ is $\mathcal T(\hik)$. 

\begin{example}
Let $\xi\in\C$, $\abs{\xi}\le 1$. Define unit vectors $\eta^\xi_{2n}:=\ket 0$ and $\eta^\xi_{2n+1}:=\xi\ket0+\sqrt{1-|\xi|^2}\ket 1$ for all $n\in\N$ and a phase matrix $(c_{n,m}^\xi)$ by $c_{n,m}^\xi:=\langle\eta^\xi_n|\eta^\xi_m\rangle$. Thus, $(c_{n,m}^\xi)$ is the phase matrix of the so-called {\it chess-board phase observable} $\E_\xi$; see \cite{LaPe00}. Assume that $\abs{\xi}\neq 1$. Then the Hilbert space
 $\hik$ is $\C\ket 0+\C\ket 1$ and the rank of $(c_{n,m}^\xi)$ is $2$.
 Moreover, 
 $$
 \kb01\notin
 {\rm lin}_{\C}\left\{\kb{\eta^\xi_n}{\eta^\xi_n} : n\in\N \right\}=\C\kb00+\C\big(
 \xi\kb01+\overline\xi\kb10+\sqrt{1-|\xi|^2}\kb11\big)
 $$
 so that $\E_\xi$ is not extremal. If $\abs{\xi}=1$, then the rank of $(c_{n,m}^\xi)$ is $1$ and $\E_\xi$ is automatically extremal; see Proposition 5 in \cite{CaHePeTo08}. In addition, $|c_{n,m}^\xi|=1$ for all $n,\,m\in\N$ so that $\E_\xi\sim\Ecan$ \cite{LaPe00}. For more examples of extremal phase observables, we refer to \cite{KiPe08}.
 \end{example}

\begin{proposition}\label{prop:creale}
If the phase matrix $(c_{m,n})$ has rank greater than $1$ and $c_{m,n}\in\R$ for all $m,n\in\N$, then 
$(c_{m,n})\notin\ext\CC$, that is, the corresponding phase observable $\E$ is not extremal.
\end{proposition}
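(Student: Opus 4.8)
The plan is to use the characterization of extremality via the vectors $\{\eta_n\}_{n\in\N}$: a phase matrix $(c_{m,n})$ is extremal if and only if the trace-norm closure of $\mathrm{lin}_\C\{\kb{\eta_n}{\eta_n}:n\in\N\}$ equals $\mathcal T(\hik)$, where $c_{m,n}=\ip{\eta_m}{\eta_n}$ and $\dim\hik=\rank(c_{m,n})\ge 2$. So to prove non-extremality it suffices to exhibit a single trace-class operator on $\hik$ that does \emph{not} lie in this closed subspace. The key observation is that when all the inner products $c_{m,n}=\ip{\eta_m}{\eta_n}$ are real, the vectors $\eta_n$ can be taken to lie in a \emph{real} subspace of $\hik$, and hence every rank-one projection $\kb{\eta_n}{\eta_n}$ is represented by a real symmetric matrix. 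The span of real symmetric operators is far smaller than all of $\mathcal T(\hik)$, so there is plenty of room to find an element outside the closure.

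First I would realize $\hik$ concretely. Since each $c_{m,n}$ is real, I would argue that one may choose the total sequence $\{\eta_n\}$ inside a real Hilbert space $\hik_\R$ whose complexification is $\hik$; concretely, picking an orthonormal basis adapted to $\hik_\R$, each $\kb{\eta_n}{\eta_n}$ has a real symmetric matrix in that basis. Then the real-linear (hence also complex-linear) span of $\{\kb{\eta_n}{\eta_n}\}$ consists only of \emph{symmetric} operators $A=A^T$ (transpose in the chosen real basis, i.e.\ $\overline{A}=A^*$), and this symmetry passes to the trace-norm closure by continuity of the transpose map on $\mathcal T(\hik)$.

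The final step is to produce a witness. Because $\rank(c_{m,n})=\dim\hik\ge 2$, the space $\hik$ contains two orthonormal vectors $e_0,e_1$ in $\hik_\R$, and I would take the antisymmetric (purely imaginary) operator $A=i(\kb{e_0}{e_1}-\kb{e_1}{e_0})$, which is self-adjoint, trace-class, but satisfies $A^T=-A$, so it is not symmetric. Hence $A$ cannot lie in the closure of $\mathrm{lin}_\C\{\kb{\eta_n}{\eta_n}\}$, which forces that closure to be a proper subspace of $\mathcal T(\hik)$. By the cited characterization, $(c_{m,n})$ is therefore not extremal, proving the proposition.

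The main obstacle I anticipate is making the reduction ``all $c_{m,n}$ real $\Rightarrow$ the $\eta_n$ live in a real form of $\hik$'' fully rigorous, i.e.\ justifying that the closed subspace generated by the $\kb{\eta_n}{\eta_n}$ really is annihilated by some nontrivial continuous functional coming from the antisymmetric part. The cleanest formulation is via duality: the trace-norm closure of a subspace $\mathcal S\subseteq\mathcal T(\hik)$ is all of $\mathcal T(\hik)$ iff no nonzero bounded operator $B\in\lh=\mathcal T(\hik)^*$ satisfies $\tr{B\,\kb{\eta_n}{\eta_n}}=0$ for all $n$; so I would instead exhibit such a $B$ directly. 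Taking $B=\kb{e_0}{e_1}-\kb{e_1}{e_0}$ (antisymmetric, hence $\tr{B\,S}=0$ for every symmetric $S$, in particular for every real rank-one projection $\kb{\eta_n}{\eta_n}$) gives a nonzero functional vanishing on the whole span, which by the characterization shows $(c_{m,n})\notin\ext\CC$. This duality route sidesteps any delicate closure argument and is the version I would write out.
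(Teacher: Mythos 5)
Your proof is correct and takes essentially the same route as the paper: both exhibit a nonzero antisymmetric bounded operator annihilating every $\kb{\eta_j}{\eta_j}$ under the trace pairing and invoke Hahn--Banach duality to conclude that ${\rm lin}_{\C}\{\kb{\eta_j}{\eta_j}\}$ is not dense in $\mathcal T(\hik)$. The paper simply chooses the witness $B=\kb{\eta_m}{\eta_n}-\kb{\eta_n}{\eta_m}$ for two linearly independent $\eta_m,\eta_n$ and verifies $\tr{B\,\kb{\eta_j}{\eta_j}}=c_{n,j}c_{j,m}-\overline{c_{n,j}}\,\overline{c_{j,m}}=0$ directly from the reality of the $c_{m,n}$, which sidesteps your real-form construction of $\hik$ entirely.
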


\begin{proof}
Let $\eta_m$, $\eta_n$ be two linearly independent vectors in $\kk$, and define a (nonzero) bounded operator $B = \kb{\eta_m}{\eta_n} - \kb{\eta_n}{\eta_m}$ on $\kk$. For every $j\in\N$, we have
$$
\tr{B \kb{\eta_j}{\eta_j}} = c_{n,j} c_{j,m} - c_{m,j} c_{j,n} = c_{n,j} c_{j,m} - \overline{c_{n,j}}\overline{c_{j,m}} = c_{n,j} c_{j,m} - c_{n,j} c_{j,m}= 0 \, .
$$
By the Hahn-Banach theorem this implies that the set  ${\rm lin}_{\C}\left\{\kb{\eta_n}{\eta_n} : n\in\N \right\}$ is not dense in $\mathcal T(\hik)$. Therefore, using the above cited criterion the phase matrix $(c_{m,n})$ is not extremal.
\end{proof}

Fix $s\in\N$. Elements of the phase matrix $\big(c_{m,n}^{\ket{s}}\big)$ of the state generated phase observable $\Es$ are of the form
$$
c_{m,n}^{\ket{s}}=\int_0^\infty f^s_m(x) f^s_n(x)e^{-x}dx \, ,
$$
where
$$
f^s_n(x)=(-1)^{\max\{0,s-n\}}\sqrt{\frac{(\min\{n,s\})!}{(\max\{n,s\})!}}x^{|s-n|/2}L^{|s-n|}_{\min\{n,s\}}(x)
$$
and 
$$
L^\alpha_k(x)=\sum_{l=0}^k(-1)^l{{k+\alpha}\choose{k-l}}\frac{x^l}{l!}
$$
is the associated Laguerre polynomial; see \cite{LaPe99, LaPe00}.
Hence, the $\eta_n$-vectors of $\big(c_{m,n}^{\ket{s}}\big)$ are the unit vectors $f^s_n$, $n\in\N$, of $L^2(\R_+,\,e^{-x}dx)$, where $\R_+$ is the set of positive reals. All the functions $f^s_n$, $n\in\N$, are linearly independent, and therefore the rank of $(c_{m,n}^{\ket s})$ is infinite. It is clear that $c_{m,n}^{\ket{s}} \in\R$ for all $m,n\in\N$, and hence the phase observable $\Es$ is not extremal by Proposition \ref{prop:creale}.

Let $\E_T$ be the phase observable generated by a state $T=\sum_{s=0}^{\infty}\lambda_s \kb{s}{s}\in\shd$. Then $\E_T=\sum_{s=0}^\infty\lambda_s \Es$ and hence, $\E_T$ is not extremal. We thus have the following conclusion.

\begin{proposition}
Any $\E_T$, $T\in\shd$, is not extremal in the convex set of all phase observables.
\end{proposition}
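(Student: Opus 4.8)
The plan is to apply Proposition \ref{prop:creale} directly to $\E_T$. By that criterion it suffices to verify two facts: that the phase matrix $(c^T_{m,n})$ of $\E_T$ is real, and that its rank exceeds $1$. The first is immediate from the weak decomposition $\E_T=\sum_{s=0}^\infty\lambda_s\Es$, which yields $c^T_{m,n}=\sum_{s=0}^\infty\lambda_s c^{\ket s}_{m,n}$. Since each $c^{\ket s}_{m,n}=\int_0^\infty f^s_m(x)f^s_n(x)e^{-x}\,dx$ is real and the weights $\lambda_s$ are nonnegative reals, we get $c^T_{m,n}\in\R$ for all $m,n\in\N$.

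The substantive step is the rank estimate, and I would obtain it by exhibiting explicit $\eta$-vectors realizing $(c^T_{m,n})$. Take $\kk=\bigoplus_{s\in\N}L^2(\R_+,e^{-x}dx)$ and set $\eta_n=\bigoplus_{s\in\N}\sqrt{\lambda_s}\,f^s_n$. Then each $\eta_n$ is a unit vector, since $\no{\eta_n}^2=\sum_s\lambda_s\no{f^s_n}^2=\sum_s\lambda_s=1$, and $\ip{\eta_m}{\eta_n}=\sum_s\lambda_s\ip{f^s_m}{f^s_n}=\sum_s\lambda_s c^{\ket s}_{m,n}=c^T_{m,n}$, so $(\eta_n)$ is a valid system of $\eta$-vectors. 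To see $\rank(c^T_{m,n})\geq 2$, fix any index $s_0$ with $\lambda_{s_0}>0$, which exists because $\sum_s\lambda_s=1$. For $m\neq n$ the $s_0$-components $\sqrt{\lambda_{s_0}}\,f^{s_0}_m$ and $\sqrt{\lambda_{s_0}}\,f^{s_0}_n$ are linearly independent, since the Laguerre functions $f^{s_0}_k$, $k\in\N$, are linearly independent. Hence $\eta_m$ and $\eta_n$ are linearly independent, so $\dim\kk\geq 2$ and $\rank(c^T_{m,n})>1$.

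With both hypotheses of Proposition \ref{prop:creale} verified, we conclude $(c^T_{m,n})\notin\ext\CC$, that is, $\E_T$ is not extremal, which is the claim. I expect the rank lower bound to be the only real obstacle; everything else is a one-line consequence of the formulas already displayed, and the crucial input—linear independence of the functions $f^s_n$—was already used in treating the single-state observables $\Es$, so carrying it over to the direct-sum vectors $\eta_n$ should be routine. As an alternative that avoids constructing $\kk$, one may argue at the level of matrix entries: by Cauchy--Schwarz together with the linear independence of $f^s_m,f^s_n$ one has $|c^{\ket s}_{m,n}|<1$ for $m\neq n$, whence $|c^T_{m,n}|\leq\sum_s\lambda_s|c^{\ket s}_{m,n}|<\sum_s\lambda_s=1$ for $m\neq n$; since a rank-one phase matrix has all entries of modulus $1$, this already forces $\rank(c^T_{m,n})>1$.
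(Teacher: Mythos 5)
Your proof is correct, but it follows a slightly different route from the paper's. The paper first shows that each single--number-state observable $\Es$ is non-extremal (its phase matrix is real with infinite rank, so Proposition \ref{prop:creale} applies), and then disposes of $\E_T$ via the convex decomposition $\E_T=\sum_{s}\lambda_s\Es$: a countable convex combination whose components are all non-extremal cannot itself be extremal. You instead apply Proposition \ref{prop:creale} \emph{directly} to $\E_T$, realizing its phase matrix by the vectors $\eta_n=\bigoplus_s\sqrt{\lambda_s}\,f^s_n$ and checking rank $>1$ by projecting onto one component with $\lambda_{s_0}>0$; this is essentially the construction the paper records only ``for completeness'' after the proposition (there written as $f^T_n=\sum_s\sqrt{\lambda_s}\,f^s_n\otimes\ket{s}$, with the stronger conclusion $\rank(c^T_{m,n})=\infty$). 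Your route is self-contained and avoids the (easy but unstated) lemma about convex combinations of non-extremal elements; the paper's route reuses the already-established facts about the $\Es$. One small point of hygiene: the rank is by definition the dimension of the \emph{closed linear span} of the $\eta_n$, not of the ambient direct sum $\bigoplus_s L^2(\R_+,e^{-x}dx)$ you place them in; your argument does establish that this span has dimension at least $2$, so the conclusion stands, but the phrase ``$\dim\kk\geq 2$'' should refer to the span. Your closing alternative --- $|c^T_{m,n}|<1$ for $m\neq n$ by strict Cauchy--Schwarz, while a rank-one phase matrix has all entries of modulus $1$ --- is also valid and is arguably the most economical argument of the three.
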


For completeness, we note that the phase matrix elements of $\E_T$ are $c^T_{m,n}=\sum_{s=0}^\infty\lambda_s c_{m,n}^{\ket s}=\ip{f^T_m}{f^T_n}$, where $f^T_n=\sum_{s=0}^\infty \sqrt{\lambda_s}\,f_n^s\otimes\ket{s}$ is a unit vector
of the Hilbert space $L^2(\R_+,\,e^{-x}dx)\otimes\mathcal H$, and it follows that $\rank(c_{m,n}^T)=\infty$.

\begin{remark}
Since the physically relevant phase observables $\E_T$ and $\Ecan$ all have
real-valued phase matrices, it would be interesting to study the extremals of the smaller convex set of real-valued phase matrices $\CC_\R=\{(c_{m,n})\in\CC\,|\,c_{m,n}\in\R,\,m,n\in\N\}$.
The complete characterization of such extemals is given in \cite[Theorem 1]{KiPe08}.
The canonical phase observable is extremal in $\CC_\R$, but the question of the extremality of the state generated phase observables is open. The method of the proof of Proposition \ref{prop:creale} cannot be directly applied in this case as the operator $\kb{\eta_m}{\eta_n} - \kb{\eta_n}{\eta_m}$ is trivial on the real Banach space $\mathcal{T}_s (\hik)$ of selfadjoint trace class operators on $\kk = \overline{{\rm lin}_{\R}\left\{\eta_n : n\in\N \right\}}$.
\end{remark}

%%%%%%%%%%%%%%%%%%%%%%%%%%%
\subsection{Postprocessing}\label{sec:phase-post}
%%%%%%%%%%%%%%%%%%%%%%%%%%%

As explained in Section \ref{sec:phase-sharp}, the observables $\Ecan$ and $\E_T$ are approximately sharp. Therefore, it follows directly from Proposition \ref{prop:3} that they are all postprocessing clean in $\obs$. It still remains to check whether they are postprocessing equivalent.
Let us start with the following observation.

\begin{proposition}\label{prop:T=T'}
Let $T,\,T'\in\shd$. If $\E_T=\E_{T'}$, then $T=T'$.
\end{proposition}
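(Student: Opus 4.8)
The plan is to reduce the statement to the injectivity of the linear map $(\lambda_s)_{s\in\N}\mapsto\big(c^T_{m,n}\big)_{m,n\in\N}$. Since two phase observables coincide exactly when their phase matrices coincide (Theorem \ref{PhTh}), and since $\E_T=\sum_s\lambda_s\Es$ yields $c^T_{m,n}=\sum_s\lambda_s c^{\ket s}_{m,n}$ (the interchange being legitimate because $|c^{\ket s}_{m,n}|\le1$ and $\sum_s\lambda_s=1$), the hypothesis $\E_T=\E_{T'}$ is equivalent to $\sum_s(\lambda_s-\lambda'_s)\,c^{\ket s}_{m,n}=0$ for all $m,n$. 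It therefore suffices to show that a summable real sequence $(\mu_s)$ with $\sum_s\mu_s c^{\ket s}_{m,n}=0$ for all $m,n$ must vanish, and I would extract this from the single row $m=0$ of the phase matrix.

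First I would compute the entries $c^{\ket s}_{0,n}$ explicitly. Using $f^s_0(x)=(-1)^s x^{s/2}/\sqrt{s!}$ together with the given expression for $f^s_n$, for $n\ge s$ one obtains
$$
c^{\ket s}_{0,n}=\frac{(-1)^s}{\sqrt{n!}}\int_0^\infty x^{n/2}L^{n-s}_s(x)\,e^{-x}\,dx
=\frac{\Gamma(n/2+1)}{\sqrt{n!}}\cdot\frac{1}{s!}\prod_{j=1}^s\Big(j-\tfrac n2\Big),
$$
the integral being evaluated by the standard identity $\int_0^\infty e^{-x}x^\beta L^\alpha_k(x)\,dx=\Gamma(\beta+1)\Gamma(k+\alpha-\beta)/\big(k!\,\Gamma(\alpha-\beta)\big)$. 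The decisive observation is the behaviour of this quantity along the even columns $n=2m$: the product $\prod_{j=1}^s(j-m)$ vanishes as soon as $s\ge m$, and is a nonzero number for $s<m$.

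This produces a triangular system. Writing $P(m):=\dfrac{\sqrt{(2m)!}}{\Gamma(m+1)}\,c^T_{0,2m}$ for $m\ge1$, the vanishing gives
$$
P(m)=\sum_{s=0}^{m-1}\lambda_s\,\frac{1}{s!}\prod_{j=1}^s(j-m),
$$
a lower-triangular linear system in $\lambda_0,\lambda_1,\dots$ whose diagonal coefficient, namely the coefficient of $\lambda_{m-1}$, equals $\tfrac{1}{(m-1)!}\prod_{j=1}^{m-1}(j-m)=(-1)^{m-1}\neq0$. Hence the entries $c^T_{0,2m}$ ($m\ge1$) determine $\lambda_0,\lambda_1,\lambda_2,\dots$ recursively and uniquely, starting from $\lambda_0=P(1)$ and solving each successive equation for the next coefficient. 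Applying this to both $T$ and $T'$ and using $c^T_{0,2m}=c^{T'}_{0,2m}$ forces $\lambda_s=\lambda'_s$ for all $s$, i.e. $T=T'$.

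The main obstacle I anticipate is purely computational: justifying the closed form for $c^{\ket s}_{0,n}$ and, above all, the vanishing for every $s\ge m$. The displayed formula is derived under the assumption $n\ge s$, so the range $s>2m$, where the case distinction in the definition of $f^s_{2m}$ switches, must be handled separately; there the same Laguerre integral produces a factor $1/\Gamma(-m)=0$, so those entries vanish as well and do not disturb the triangular structure. Once these evaluations are in place, the rest is bookkeeping, and the recursive inversion delivers uniqueness of $(\lambda_s)$.
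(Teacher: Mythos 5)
Your proposal is correct and follows essentially the same route as the paper's proof: both restrict to the row $m=0$ and the even columns $n=2k$ of the phase matrix, evaluate $c^{\ket s}_{0,2k}$ via the standard Laguerre integral, observe that these entries vanish precisely for $s\geq k$ (including the regime $s>2k$ where the pole of $\Gamma$ at $-k$ kills the integral), and invert the resulting lower-triangular system recursively. Your version merely makes the nonzero diagonal coefficient $(-1)^{m-1}$ explicit where the paper simply divides by $c^{\ket k}_{0,2(k+1)}\neq 0$.
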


\begin{proof}
Let $T=\sum_{s\in\N}\lambda_s\kb s s\in\shd$ and  $T'=\sum_{s\in\N}\lambda'_s\kb s s\in\shd$.
Thus, $c^T_{m,n}=\sum_{s=0}^\infty\lambda_s c_{m,n}^{\ket s}$ and
$c^{T'}_{m,n}=\sum_{s=0}^\infty\lambda'_s c_{m,n}^{\ket s}$.
Assume $\E_T=\E_{T'}$ so that $c^{T}_{m,n}=c^{T'}_{m,n}$ for all $m,\,n\in\N$.
For all $s,\,k\in\N$, 
\begin{eqnarray*}
c_{0,2k}^{\ket{s}}&=&(-1)^{s+\max\{0,s-{2k}\}}\sqrt{\frac{(\min\{{2k},s\})!}{s!(\max\{{2k},s\})!}}
\int_0^\infty x^{(s+|s-{2k}|)/2}L^{|s-{2k}|}_{\min\{{2k},s\}}(x)e^{-x}dx \\
&=&(-1)^{s+\max\{0,s-{2k}\}}
\frac{\big((|s-2k|+s)/2\big)!\,\Gamma(k)}
{\sqrt{(2k)!}\,s!\,\Gamma\big((|s-2k|-s)/2\big)}
\end{eqnarray*}
where $\Gamma$ is the Gamma function and we have used the following equation (see formula 7.414(11) in \cite{TISP80}):
$$
\int_0^\infty x^{\gamma-1}L^\alpha_n(x)e^{-x}dx=\frac{\Gamma(\gamma)\Gamma(1+\alpha-\gamma+n)}
{n!\,\Gamma(1+\alpha-\gamma)}\, ,\hspace{1cm}\gamma>0,\;\alpha,\,n\in\N.
$$
Since $\lim_{x\to-n}|\Gamma(x)|=\infty$ for all $n\in\N$,
it follows that
$$
c_{0,2k}^{\ket s}=0
$$ 
if and only if $0<k\le s$.
Thus,
$$
c^T_{0,2k}=\sum_{s=0}^{k-1}\lambda_s c_{0,2k}^{\ket s}=\sum_{s=0}^{k-1}\lambda'_s c_{0,2k}^{\ket s}.
$$
By induction $T=T'$. Indeed,
$c^T_{0,2}=\lambda_0 c_{0,2}^{\ket 0}=\lambda'_0 c_{0,2}^{\ket 0}$ implies that $\lambda_0=\lambda'_0$.
If $\lambda_s=\lambda'_s$ for all $s=0,1,...,k-1$ then
\begin{equation*}\
(\lambda_k-\lambda'_k)c_{0,2(k+1)}^{\ket k}=
c^T_{0,2(k+1)}-c^{T'}_{0,2(k+1)}=0.
\end{equation*}
Thus, $\lambda_k=\lambda'_k$.
\end{proof}

\begin{proposition}
Phase observables $\Ecan$ and $\E_T$ are postprocessing clean. They are all postprocessing non-equivalent.
\end{proposition}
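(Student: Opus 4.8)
The plan is to dispose of the cleanness assertion immediately and then reduce the non-equivalence assertion to the rigidity of the asymptotic parameter $u=\lim_m c_{m,m+k}^{1/k}$ attached to each phase matrix.

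First I would observe that the cleanness claim requires no new work. By Proposition \ref{prop:phase-u} the canonical observable $\Ecan$ is approximately sharp, and by Proposition \ref{prop:11} so is every $\E_T$ with $T\in\shd$. Hence Proposition \ref{prop:3} applies directly and yields that $\Ecan$ and all $\E_T$ are postprocessing clean in $\obs$.

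For the non-equivalence claim, the lever is the second statement of Proposition \ref{prop:3}: since each observable in question is approximately sharp, two of them $\E,\F$ satisfy $\E\csim\F$ if and only if $\F=\E_x$ for some $x\in\T$. So I would first record what a translation does to a phase matrix. As $\T$ is Abelian, $\E_x(X)=U(x^{-1})\E(X)U(x)$; inserting $U(x)\ket n=x^n\ket n$ and the Phase Theorem (Theorem \ref{PhTh}) gives $\langle m|\E_x(X)|n\rangle=x^{n-m}c_{m,n}\int_X t^{m-n}\,dt$, so the phase matrix of $\E_x$ is $(x^{n-m}c_{m,n})$; its off-diagonal entries read $x^{k}c_{m,m+k}$.

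Now I would exploit the limit $\lim_{m\to\infty}c_{m,m+k}$, which by Proposition \ref{prop:11} equals $1$ for every $k\geq1$ both for $\Ecan$ (where $c_{m,n}\equiv1$) and for every $\E_T$. If $\E_{T'}\csim\E_T$, then $\E_{T'}=(\E_T)_x$ for some $x$, so $c^{T'}_{m,n}=x^{n-m}c^T_{m,n}$; letting $m\to\infty$ along the $k$-th off-diagonal gives $1=x^k$ for all $k\geq1$, whence $x=1$, $\E_{T'}=\E_T$, and therefore $T=T'$ by Proposition \ref{prop:T=T'}. The same computation with $\Ecan$ in place of $\E_{T'}$ forces $x=1$ and $\Ecan=\E_T$; but $c^{\Ecan}_{0,1}=1$ while $c^T_{0,1}=\langle f^T_0|f^T_1\rangle$ has modulus strictly less than $1$ because $f^T_0,f^T_1$ are linearly independent unit vectors, a contradiction. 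This establishes pairwise non-equivalence across the whole family. I expect the only point needing care to be the verification that translation scales the phase matrix by the unimodular factor $x^{n-m}$; once that is in hand the rigidity is automatic. The genuinely essential ingredient, rather than an obstacle, is that $\E_T$ and $\E_{T'}$ cannot be separated by rank or by the reality argument used earlier for extremality (both have infinite rank and real entries), so distinguishing them relies squarely on the injectivity statement of Proposition \ref{prop:T=T'} together with the invariance of $u$ under translation.
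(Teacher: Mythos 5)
Your proposal is correct and takes essentially the same route as the paper: approximate sharpness plus Proposition \ref{prop:3} for cleanness, then the limit $\lim_{m\to\infty}c^T_{m,m+k}=1$ applied to the translated phase matrix to force the translation parameter to equal $1$, and Proposition \ref{prop:T=T'} for injectivity on $\shd$. The only deviation is the final separation of $\Ecan$ from $\E_T$, where the paper computes $c^T_{0,2}=\lambda_0 c^{\ket 0}_{0,2}=\lambda_0/\sqrt{2}$ explicitly while you invoke strict Cauchy--Schwarz for $c^T_{0,1}=\ip{f^T_0}{f^T_1}$; this is fine, since $f^s_0$ and $f^s_1$ are linearly independent for any $s$ with $\lambda_s>0$ and hence so are $f^T_0$ and $f^T_1$.
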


\begin{proof}
We need to prove the second claim. For two approximately sharp phase observables $\E$ and $\E'$ with phase matrices $(c_{m,n})$ and $(c'_{m,n})$, Prop. \ref{prop:3}  implies (taking into account that $\T$ is Abelian group) that $\E\csim\E'$ equals $\E'(\cdot)=U(t)\E(\cdot)U(t)^*$ for some $t\in\T$. By Theorem \ref{PhTh} this means that $c'_{m,n}=t^{m-n}c_{m,n}$ for all $m,\,n\in\N$.

Let $T,\,T'\in\shd$ and assume that, for some $t\in\T$,
$\E_{T'}(\cdot)=U(t)\E_T(\cdot)U(t)^*$. We have seen in the proof of Prop. \ref{prop:11} that $\lim_{m\to\infty}c^T_{m,m+k} = \lim_{m\to\infty}c^{T'}_{m,m+k} = 1$ for all $k\in\N$. On the other hand, $c^{T'}_{m,m+k} = t^{-k} c^{T}_{m,m+k}$, from which $t = 1$ follows. Hence, $\E_T\csim\E_{T'}$ only if $\E_T=\E_{T'}$, i.~e.~only if $T = T'$ by Prop.~\ref{prop:T=T'}.
%From Prop. \ref{prop:T=T'} we conclude that $T=T'$.

Assume then that $\Ecan\csim\E_T$. It follows that $|c^T_{0,2}|=1$. This is impossible since $c^T_{0,2}=\lambda_0 c_{0,2}^{\ket 0}$ (see the proof of Prop. \ref{prop:T=T'}) and
$$
c_{0,2}^{\ket{0}}=\int_0^\infty f^0_0(x) f^0_2(x)e^{-x}dx=
\frac1{\sqrt2}\int_0^\infty x e^{-x}dx=\frac1{\sqrt2}<1 \, .
$$
\end{proof}

%For example, by studying the first nontrivial matrix elements [reference], $$c_{0,1}^{|s+1\rangle}=\frac{2s+1}{2s+2}c_{0,1}^{|s\rangle},\;\; s\in\N, \hspace{1cm}c_{0,1}^{\ket0}\frac{\sqrt{\pi}}{2}$$

%%%%%%%%%%%%%%%%%%%%%%%%%%%
\subsection{Preprocessing}\label{sec:phase-pre}
%%%%%%%%%%%%%%%%%%%%%%%%%%%

Let $\E$ be a phase observable determined by the phase matrix $(c_{m,n})\in\CC$.
As shown in \cite{Pellonpaa02}, we can define a $U$-covariant channel by formula
$$
\trh\ni T\mapsto \Phi_\E(T):=\sum_{m,n=0}^\infty c_{n,m}\langle m|T|n\rangle \kb m n \in\trh \, .
$$
Moreover, this channel satisfies 
$$
\Theta_\E = \Theta_{\Ecan} \circ \Phi_\E \, .
$$
This shows that every phase observable is a preprocessing of the canonical phase observable $\Ecan$. In particular, $\Ecan$ is preprocessing clean.

It remains to find the equivalence class of $\Ecan$ in the preprocessing relation. The preprocessing clean phase observables are exactly the phase observables belonging to this equivalence class.

\begin{proposition}\label{prop:PhiVCov}
If $\Xi\in\CPH$ is such that 
$$
\Xi(U(t) A U(t)^\ast) = U(t) \Xi(A) U(t)^\ast
$$
for all $t\in \T$ and $A\in\lh$, then there exists a sequence of vectors $\{ \phi_i \}_{i\in\N}$ in $\hh\otimes \hh$ with $\no{\phi_i} = 1$ such that
$$
\langle j | \Xi (A) | i \rangle = \int t^{i-j} \ip{\phi_j}{\lft I \otimes U(t) A U(t)^\ast \rgt \phi_i} \de t \quad \forall i,j \in\N \, .
$$
\end{proposition}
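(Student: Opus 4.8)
The plan is to feed $\Xi$ directly into Proposition \ref{prop:PhiEst}. Since $\Xi\in\CPH$ satisfies the stated covariance $\Xi(U(t)AU(t)^\ast)=U(t)\Xi(A)U(t)^\ast$, that proposition supplies a separable Hilbert space $\kk$, a unitary representation $D$ of $\T$ on $\kk$, and an isometry $W:\hh\frecc\kk\otimes\hh$ with $WU(t)=(D(t)\otimes U(t))W$ and $\Xi(A)=W^\ast(I\otimes A)W$ for all $t\in\T$, $A\in\lh$. Because $\T$ is compact and abelian, $D$ decomposes into characters, $\kk=\bigoplus_{m\in\Z}\kk_m$ with $D(t)v=t^m v$ for $v\in\kk_m$; I write $P_m$ for the projection onto $\kk_m$.

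Next I would read off the grading of the vectors $W\ket{i}$. Applying the intertwining relation to $\ket{i}$ gives $(D(t)\otimes U(t))W\ket{i}=WU(t)\ket{i}=t^iW\ket{i}$ for every $t$, so $W\ket{i}$ lies in the joint eigenspace $\bigoplus_{n\in\N}\kk_{i-n}\otimes\ket{n}$. Hence $W\ket{i}=\sum_{n\in\N}\zeta_n^{(i)}\otimes\ket{n}$ with $\zeta_n^{(i)}\in\kk_{i-n}$, and $\sum_n\no{\zeta_n^{(i)}}^2=\no{W\ket{i}}^2=1$ since $W$ is an isometry. The crucial consequence is the orthogonality $\ip{\zeta_{n'}^{(j)}}{\zeta_n^{(i)}}=0$ whenever $n'-n\neq j-i$, because then $\zeta_{n'}^{(j)}\in\kk_{j-n'}$ and $\zeta_n^{(i)}\in\kk_{i-n}$ sit in orthogonal spectral subspaces.

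Finally I would transport everything into $\hh\otimes\hh$. Since $\kk$ is separable and $\hh$ is infinite-dimensional and separable, fix a linear isometry $V:\kk\frecc\hh$ and set $\phi_i:=(V\otimes I)W\ket{i}\in\hh\otimes\hh$; then $\no{\phi_i}=1$. It remains a direct computation to verify the asserted identity. Expanding $\langle j|\Xi(A)|i\rangle=\ip{W\ket{j}}{(I\otimes A)W\ket{i}}$ and the right-hand side $\int t^{i-j}\ip{\phi_j}{(I\otimes U(t)AU(t)^\ast)\phi_i}\de t$ in the number basis, one uses $U(t)^\ast\ket{n}=t^{-n}\ket{n}$ together with $\int_\T t^k\de t=\delta_{k,0}$ to collapse the $t$-integral, which forces exactly the index relation $n'=n+j-i$; the orthogonality from the previous step collapses the left-hand side to the same set of surviving terms; and $\ip{V\zeta_{n'}^{(j)}}{V\zeta_n^{(i)}}=\ip{\zeta_{n'}^{(j)}}{\zeta_n^{(i)}}$ matches the two sides coefficient by coefficient.

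I expect the only real work to be the bookkeeping in this last verification, namely keeping the two gradings aligned (the $\Z$-grading coming from $D$ and the $\N$-grading coming from $U$). The genuine content lies entirely in Proposition \ref{prop:PhiEst} together with the observation that covariance forces the weight constraint $m+n=i$ on the components of $W\ket{i}$; this is precisely what makes the character integral $\int t^{i-j}(\cdots)\de t$ reproduce the matrix elements of $\Xi$.
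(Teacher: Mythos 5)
Your proof is correct, and it rests on the same key lemma as the paper's proof --- Proposition \ref{prop:PhiEst} --- but carries out the representation-theoretic step differently. The paper first normalizes the pair $(\kk,D)$ to the canonical model $\kk=\ldue{\T;\hh}$, $D=\rho\otimes I$ (using $\dim\hh=\infty$ to absorb all multiplicities), then disentangles with the unitary $T$ and classifies the intertwiners by $\tilde{W}\ket{n}=f_n\otimes\phi_n$, so that the displayed formula drops out of the explicit form of $T$. You instead keep $(\kk,D)$ abstract, decompose $\kk$ into character subspaces $\kk_m$, and read off from the intertwining relation that $W\ket{i}$ is supported on $\bigoplus_{n}\kk_{i-n}\otimes\C\ket{n}$; the resulting orthogonality of the components $\zeta^{(i)}_n$ lying in distinct $\kk_m$'s is then exactly what the character integral $\int t^{i-j}(\cdots)\,dt$ reproduces on the other side. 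A pleasant feature of your route is that the isometry $V:\kk\frecc\hh$ need not intertwine anything, since the first tensor factor of $\hh\otimes\hh$ only ever carries the identity; the price is the index bookkeeping in the final verification, where strictly you should also justify the interchange of $\int$ and $\sum$ (e.g.\ by truncating the expansion of $(I\otimes U(t)^\ast)\phi_i$, which converges in norm uniformly in $t$). Both arguments invoke $\dim\hh=\infty$ at the same point, namely to fit a copy of $\kk$ --- equivariantly in the paper, by a plain isometry in your version --- inside $\hh$.
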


\begin{proof}
Let $\rho$ be the right regular representation of $\T$ in $\ldue{\T}$. Since $\T$ is compact and $\dim \hh = \infty$, every separable unitary representation of $\T$ is contained in the tensor product representation $\rho \otimes I$ acting in the space $\ldue{\T} \otimes \hh = \ldue{\T ; \hh}$. Therefore, $\kk = \ldue{\T ; \hh}$ and $D = \rho \otimes I$ is the most general choice in the context of Prop. \ref{prop:PhiEst}.

We define the following unitary operator $T : \ldue{\T} \otimes \hh \otimes \hh \frecc \ldue{\T} \otimes \hh \otimes \hh$
$$
[T(f \otimes u \otimes v)](t) = f(t) \otimes u \otimes U(t)^\ast v .
$$
Clearly,
$$
T (\rho \otimes I \otimes I) = (\rho \otimes I \otimes U) T = (D \otimes U) T .
$$
Therefore, every isometry $W : \hh \frecc \kk \otimes \hh$ intertwining $U$ with $D\otimes U$ is of the form $W = T\tilde{W}$, where $\tilde{W}$ is an isometry intertwining $U$ with $\rho \otimes I \otimes I$. The most general such $\tilde{W}$ is given by
$$
\tilde{W}\ket{n} = f_n \otimes \phi_n ,
$$
where $\{ \phi_i \}_{i\in\N}$ is a sequence of unit vectors in $\hh\otimes \hh$. By Prop. \ref{prop:PhiEst}, every $U$-covariant $\Xi\in\CPH$ is thus of the form
\begin{eqnarray*}
\langle j | \Xi (A) | i \rangle & = & \langle j |\tilde{W}^\ast T^\ast (I_\kk \otimes A) T \tilde{W}| i \rangle \\
& = & \ip{T(f_j \otimes \phi_j)}{(I_{\ldue{\T}} \otimes I_\hh \otimes A) T(f_i \otimes \phi_i)} \\
& = & \int t^{i-j} \ip{\phi_j}{\lft I \otimes U(t) A U(t)^\ast \rgt \phi_i} \de t
\end{eqnarray*}
for all $A\in\lh$ and $i,j\in\N$.
\end{proof}

\begin{proposition}\label{PropEcan}
If $\E$ is a phase observable with phase matrix $(c_{m,n})$, then the following are equivalent:
\begin{itemize}
\item[(i)] $\E \dsim \Ecan$
\item[(ii)] there exists $n_0\in\N$ and a sequence $\{ \lambda_n \}_{n\geq n_0}$ in $\T$ such that $c_{m,n} = \overline{\lambda}_m \lambda_n$ for all $m,n\geq n_0$. 
% $\E \sim_U \Ecan$
%\item[(iii)] $\E \sim \Ecan$
\end{itemize}
\end{proposition}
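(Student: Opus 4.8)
The plan is to first reduce the two-sided equivalence to a single relation. The discussion preceding the statement shows that $\E\dis\Ecan$ holds for \emph{every} phase observable, so (i) is equivalent to the single relation $\Ecan\dis\E$, i.e. to the existence of a channel $\Phi$ with $\Ecan(X)=\Phi^\ast(\E(X))$ for all $X\in\borT$. I would then treat the two implications separately, the first constructively and the second by a Cauchy--Schwarz saturation argument.

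\emph{(ii)$\Rightarrow$(i).} Given $n_0$ and $\{\lambda_n\}_{n\ge n_0}\subset\T$, I would define the isometry $V\ket n=\overline{\lambda}_{n+n_0}\ket{n+n_0}$ and set $\Phi(T)=VTV^\ast$, so that $\Phi^\ast(A)=V^\ast AV\in\CPH$. Using the Phase Theorem \ref{PhTh} together with $c_{m+n_0,n+n_0}=\overline{\lambda}_{m+n_0}\lambda_{n+n_0}$, a one-line matrix-element computation gives $\langle m|\Phi^\ast(\E(X))|n\rangle=\int_X t^{m-n}\,dt=\langle m|\Ecan(X)|n\rangle$, hence $\Ecan\dis\E$. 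The mechanism is that $V$ pushes every number state beyond $n_0$, where $\E$ already agrees with $\Ecan$ up to the phases $\lambda_n$, and simultaneously cancels those phases.

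\emph{(i)$\Rightarrow$(ii).} By Proposition \ref{prop:channel-covariant} the channel may be assumed $U$-covariant, so by Proposition \ref{prop:PhiVCov} its adjoint $\Xi$ has the stated integral form with unit vectors $\{\phi_i\}\subset\hh\otimes\hh$; write $(\phi_i)_{kl}=\langle k\otimes l|\phi_i\rangle$. Substituting $U(t)\E(X)U(t)^\ast=\E(tX)$ and the Phase Theorem into $\langle j|\Xi(\E(X))|i\rangle=\langle j|\Ecan(X)|i\rangle$ and using the Haar orthogonality $\int_\T t^{\,i-j+l'-l}\,dt=\delta_{l'-l,\,j-i}$, the double sum collapses; equating the coefficients of the nonzero measure $X\mapsto\int_X t^{\,j-i}\,dt$ yields the scalar identities
\[
\sum_{k,l}\overline{(\phi_j)_{k,\,l+(j-i)}}\,(\phi_i)_{k,l}\,c_{\,l+(j-i),\,l}=1\qquad(i,j\in\N).
\]
The heart of the argument is to read these through Cauchy--Schwarz. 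Fix $i\le j$, put $d=j-i\ge0$ and $b^{(j,d)}_{k,l}:=(\phi_j)_{k,l+d}$, a vector of norm $\le1$. Since $|c_{l+d,l}|\le1$ and $\no{\phi_i}=1$, the identity forces equality throughout
\[
1=\Big|\textstyle\sum_{k,l}\overline{b^{(j,d)}_{k,l}}\,(\phi_i)_{k,l}\,c_{l+d,l}\Big|\le\sum_{k,l}|b^{(j,d)}_{k,l}||(\phi_i)_{k,l}||c_{l+d,l}|\le\sum_{k,l}|b^{(j,d)}_{k,l}||(\phi_i)_{k,l}|\le\no{b^{(j,d)}}\le1.
\]
Taking $i=0$ (so $d=j$), the outer equality $\no{b^{(j,j)}}=1$ gives $(\phi_j)_{k,l}=0$ for $l<j$, and the Cauchy--Schwarz equality gives the common modulus profile $|(\phi_j)_{k,l+j}|=|(\phi_0)_{k,l}|$; hence the support $S_j:=\{l:(\phi_j)_{k,l}\neq0\text{ for some }k\}$ satisfies $S_j=S_0+j$. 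Taking $j=i+1$ (so $d=1$), nonnegativity of the summands forces the middle inequality to be an equality termwise, so $|c_{l+1,l}|=1$ for every $l\in S_i$. Since $\bigcup_i S_i=\bigcup_i(S_0+i)\supseteq\{l:l\ge l_0\}$ with $l_0:=\min S_0$, I conclude $|c_{l+1,l}|=1$ for all $l\ge l_0$.

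Finally, positive semidefiniteness will close the argument: for $l\ge l_0$ the relation $|c_{l+1,l}|=1=\sqrt{c_{l,l}c_{l+1,l+1}}$ makes the $2\times2$ principal submatrix rank one, so the Gram vectors of $(c_{m,n})$ satisfy $\eta_{l+1}\in\T\cdot\eta_l$; iterating shows all $\eta_l$ with $l\ge l_0$ are unimodular multiples of a single unit vector, i.e. $c_{m,n}=\overline{\lambda}_m\lambda_n$ for $m,n\ge l_0$, which is (ii) with $n_0=l_0$. I expect the main obstacle to be exactly this geometric extraction: squeezing out of a \emph{single} Cauchy--Schwarz saturation the three facts that each $\phi_j$ is supported on number states $\ge j$, that the moduli are common shifts, and consequently that $|c_{l+1,l}|=1$ holds on an entire tail of $\N$ rather than on a sparse index set --- without which the positive-definite propagation cannot be launched.
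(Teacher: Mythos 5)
Your proof is correct, and its backbone coincides with the paper's: reduce (i) to the single relation $\Ecan\dis\E$ (since every phase observable is a preprocessing of $\Ecan$), pass to a $U$-covariant channel via Proposition \ref{prop:channel-covariant}, apply Proposition \ref{prop:PhiVCov}, extract a family of scalar normalization identities indexed by pairs $(i,j)$, and exploit Cauchy--Schwarz saturation. You diverge in two sub-arguments, both legitimately. For (ii)$\Rightarrow$(i) the paper constructs the covariant channel explicitly by exhibiting vectors $\phi_k$ satisfying its conditions (1)--(2), whereas your isometry $V\ket{n}=\overline{\lambda}_{n+n_0}\ket{n+n_0}$ with $\Phi(T)=VTV^\ast$ is shorter and perfectly admissible, since Definition \ref{def:pre} does not require the channel to be covariant. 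For (i)$\Rightarrow$(ii) the paper rewrites the identity as $\sum_n\ip{\eta_n\otimes\phi^n_q}{\eta_{p-q+n}\otimes\phi^{p-q+n}_p}=1$ and saturates a single Cauchy--Schwarz inequality in the tensor product, reading off $\eta_n\otimes\phi^n_0=\eta_{p+n}\otimes\phi^{p+n}_p$ and hence $\eta_{p+n_0}=\lambda_p\eta_{n_0}$ in one stroke; you instead saturate a componentwise chain of inequalities, obtain $|c_{l+1,l}|=1$ on the tail $l\geq l_0$, and then propagate by rank-one $2\times 2$ principal minors of the positive semidefinite phase matrix. Your route costs one extra step but uses the same underlying information (the support condition $\phi^n_p=0$ for $n<p$ and the shifted modulus profile). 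Two points you leave implicit are routine but worth flagging: the interchange of the phase-matrix expansion with the Haar integral (the paper handles this through the $L^\infty(\T)$ functional calculus and Fourier analysis), and the fact that it is the Cauchy--Schwarz saturation for the pair $(i,i+1)$ --- not only for $(0,j)$ --- that makes the supports of $(\phi_{i+1})_{k,l+1}$ and $(\phi_i)_{k,l}$ coincide, so that $|c_{l+1,l}|=1$ is genuinely forced for every $l\in S_i$.
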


\begin{proof}
Let $\linf$ be the Banach space of bounded measurable functions on the unit circle $\T$ with the $\sup$ norm. We recall that, if $\E$ is a (not necessarily covariant) POVM based on $\T$ with values in $\hh$, then we can define a norm decreasing linear map $\E : \linf \frecc \lh$ in the following way. If $T\in\trh$, let $p^{\E}_T$ be the bounded complex measure on $\T$ given by
$$
p^{\E}_T (X) = \tr{T \E (X)} \quad \forall X\in\borT.
$$
If $f\in\linf$, the operator $\E (f)$ is then defined by
$$
\tr{T \E (f)} = \int f(t) \de p^{\E}_T (t) \quad \forall T\in\trh \, .
$$

For all $j\in\Z$, let $f_j (t) = t^j$. Then
\begin{eqnarray*}
\E (f_j) = \sum_{n\geq 0} c_{n,j+n} \kb{n}{j+n} && \textrm{ if } j\geq 0 \\
\E (f_j) = \sum_{n\geq 0} c_{n-j,n} \kb{n-j}{n} && \textrm{ if } j\leq 0 ,
\end{eqnarray*}
the sums converging in the weak sense. Clearly, $\E (f_j)^\ast = \E (f_{-j})$ for all $j\in\N$.

If $\Xi \in \CPH$, then it is easy to check that $\Xi (\E(f)) = \E^\Xi (f)$, where $\E^\Xi$ is the POVM given by
$$
\E^\Xi (X) = \Xi (\E(X)) \quad \forall X\in\borT \, .
$$

Suppose $\Xi$ is a $U$-covariant channel, and let $\{ \phi_i \}_{i\in\N}$ be a sequence of unit vectors in $\hh\otimes \hh$ associated to $\Xi$ as in Prop. \ref{prop:PhiVCov}. Then by Fourier analysis $\Ecan = \E^\Xi$ if and only if $\Ecan (f_j) = \E^\Xi (f_j)$ for all $j\in\N$.

For $j\in\N$, we have
\begin{eqnarray*}
\langle q | \E^\Xi (f_j) | p \rangle & = & \int t^{p-q} \ip{\phi_q}{\lft I \otimes U(t) \E (f_j) U(t)^\ast \rgt \phi_p} \de t \\
& = & \int t^{p-q} \sum_{n\geq 0} c_{n,j+n} \ip{\phi_q}{\lft I \otimes U(t) \kb{n}{j+n} U(t)^\ast \rgt \phi_p} \de t \\
& = & \sum_{n\geq 0} c_{n,j+n} \int t^{p-q-j} \ip{\phi_q}{\lft I \otimes \kb{n}{j+n} \rgt \phi_p} \de t \\
& = & \delta_{j,p-q} \sum_{n\geq 0} c_{n,j+n} \ip{\phi_q}{\lft I \otimes \kb{n}{j+n} \rgt \phi_p}
\end{eqnarray*}
for all $p,q\in \N$. Setting
\begin{equation}\label{def. phik}
\phi_k = \sum_{r\geq 0} \phi^r_k \otimes \ket{r}
\end{equation}
with $\phi^r_k \in \hh$ such that $\sum_{r\geq 0} \no{\phi^r_k}^2 = 1$, the last equation becomes
$$
\langle q | \E^\Xi (f_j) | p \rangle = \delta_{j,p-q} \sum_{n\geq 0} c_{n,j+n} \ip{\phi^n_q}{\phi^{j+n}_p} .
$$
This must be compared with
$$
\langle q | \Ecan (f_j) | p \rangle = \delta_{j,p-q} .
$$
The two expressions are the same if and only if
\begin{equation}\label{.}
\sum_{n\geq 0} c_{n,p-q+n} \ip{\phi^n_q}{\phi^{p-q+n}_p} = 1 \quad \textrm{for all } p\geq q\geq 0 .
\end{equation}
If $\{ \eta_n \}_{n\in\N}$ are unit vectors in $\hh$ such that $c_{n,m} = \ip{\eta_n}{\eta_m}$ (see \cite{CaDeLaPe02}), eq.~\eqref{.} amounts to
\begin{equation}\label{**}
\sum_{n\geq 0} \ip{\eta_n \otimes \phi^n_q}{\eta_{p-q+n} \otimes \phi^{p-q+n}_p} = 1 \quad \textrm{for all } p\geq q\geq 0 .
\end{equation}
Since
$$
\sum_{n\geq 0} \no{\eta_{j+n} \otimes \phi^{j+n}_k}^2 = \sum_{n\geq 0} \no{\phi^{j+n}_k}^2 \leq 1 \quad \textrm{for all } j,k\in\N ,
$$
by Cauchy-Schwartz inequality Eq. \eqref{**} holds if and only if
$$
\eta_n \otimes \phi^n_q = \eta_{p-q+n} \otimes \phi^{p-q+n}_p \ \forall n\in\N \quad \textrm{and} \quad \sum_{n\geq 0} \no{\phi^{n}_q}^2 = \sum_{n\geq 0} \no{\phi^{p-q+n}_p}^2 = 1 .
$$
for all $p\geq q \geq 0$. These two conditions are in turn equivalent to the following
\begin{enumerate}
\item $\phi_p^n = 0$ for all $0\leq n < p$, and $\sum_{n\geq p} \no{\phi_p^n}^2 = 1$;
\item $\eta_n \otimes \phi^n_0 = \eta_{p+n} \otimes \phi^{p+n}_p$ for all $n,p\in\N$.
\end{enumerate}

If conditions (1) and (2) hold, let $n_0$ be such that $\phi_0^{n_0} \neq 0$. Condition (2) with $n=n_0$ then implies $\eta_{p+n_0} = \lambda_p \eta_{n_0}$ for all $p\in \N$, and thus $c_{n_0 + p , n_0 + q} = \overline{\lambda}_p \lambda_q$ for all $p,q\geq 0$.

If conversely there exists $n_0\in\N$ and a sequence $\{ \lambda_n \}_{n\geq n_0}$ such that $c_{m,n} = \overline{\lambda}_m \lambda_n$ for all $m,n\geq n_0$, then by Cauchy-Schwartz inequality $\eta_n = \overline{\lambda}_{n_0} \lambda_n \eta_{n_0}$  for all $n\geq n_0$. Choose $\phi_k$ in Eq.~\eqref{def. phik} such that
$$
\phi_q^n = \delta_{n_0 , n-q} \overline{\lambda}_n \lambda_{n_0} \phi \quad \forall n\in\N ,
$$
with $\phi\in \hh$ such that $\no{\phi} = 1$ (in the above formula, $\phi_q^n = 0$ if $n<n_0$). Then the sequence $\{\phi_q^n \}_{q,n\in\N}$ satisfies conditions (1) and (2) above.
\end{proof}

A phase observable can be preprocessing clean without being extremal. Indeed, we have the following example.

\begin{example}\rm
Let $n_0\ge1$. Define $c_{nm}=1$ if $n\ge n_0$ and $m\ge n_0$, and $c_{nm}=\delta_{nm}$ (Kronecker delta) otherwise.
Obviously, $(c_{nm})$ is a phase matrix and the corresponding phase observable $\E$ is not unitarily equivalent with $\Ecan$.
As a consequence of Proposition \ref{PropEcan} we conclude that $\E\dsim\Ecan$. However, in Proposition \ref{prop:3}  we have shown that $\E\sim_U\Ecan$ equals $\E\csim\Ecan$. Therefore, $\E$ and $\Ecan$ are not postprocessing equivalent.
Note that $c_{m,n} \in \R$ for all $m,n$ and the rank of $(c_{nm})$ is $n_0+1$ (this can be seen easily by noting that the $\eta$-sequence of $(c_{nm})$ can be chosen such that the first $n_0+1$ $\eta_n$'s are mutually orthonormal and $\eta_n=\eta_{n_0+1}$ for all $n> n_0+1$), hence $\E$ is not extremal by Prop.~\ref{prop:creale}. Moreover, $\E$ is approximately sharp by Prop.~\ref{prop:phase-u} and, thus, postprocessing clean.
\end{example}

We have seen that $\Ecan$ is optimal in all four ways. However, there are also other phase observables sharing this feature. This is illustrated in the following example.

\begin{example}
Let $\{ f_1 , f_2 \}$ be the canonical basis of $\C^2$, and let $\eta_0 = f_1$, $\eta_1 = f_2$, $\eta_2 = 2^{-1/2} (f_1 + f_2)$, $\eta_3 = 2^{-1/2} (f_1 + i f_2)$, and $\eta_n = f_1$ for all $n\geq 4$. Define the phase matrix $c_{m,n} = \ip{\eta_m}{\eta_n}$, and let $\E$ be the associated phase observable.

It has been shown in \cite{KiPe08} that $(c_{m,n})$ is a rank $2$ phase matrix which is extremal in the convex set of phase matrices, hence $\E$ is extremal in the convex set $\obs$. Moreover, $\lim_{m\to \infty} c_{m,m+k} = 1$ for all $k\in\N$, and therefore $\E$ is approximately sharp by Proposition \ref{prop:phase-u}.  It follows from Proposition \ref{prop:3} that it is also postprocessing clean. By Proposition \ref{PropEcan}  the observable $\E$ is preprocessing clean. However, $\E \not\sim_U \Ecan$ since the ranks of the associated phase matrices are different. 
\end{example}

\noindent{\bf Acknowledgments.} The authors thank Jukka Kiukas and Kari Ylinen for valuable discussions.

\end{document}